\newtheorem{xdefinition}{Definition}
\newtheorem{xobservation}{Observation}
\newtheorem{xtheorem}{Theorem}
\newtheorem{xlemma}{Lemma}
\newtheorem{xproposition}{Proposition}
\newtheorem{xcorollary}{Corollary}
\newenvironment{definition}{\begin{xdefinition}\rm}%
{\hspace*{\fill}\raisebox{-1pt}{$\blacksquare$}\end{xdefinition}}
{\hspace*{\fill}\raisebox{-1pt}{\boldmath$\Box$}\end{xobservation}}
\newenvironment{theorem}{\begin{xtheorem}\rm}{\end{xtheorem}}
\newenvironment{lemma}{\begin{xlemma}\rm}{\end{xlemma}}
\newenvironment{corollary}{\begin{xcorollary}\rm}{\end{xcorollary}}
\newenvironment{proof}{\begin{trivlist}\item[]{\bf Proof }}%
{\hspace*{\fill}\raisebox{-1pt}{\boldmath$\Box$}\end{trivlist}}
\newcommand{\OPT}{\ensuremath{\operatorname{\textsc{Opt}}}\xspace}
\newcommand{\eps}{\varepsilon}
\newcommand{\abs}[1]{\left| #1\right|}
\newcommand{\FTP}{\ensuremath{\operatorname{\textsc{FtP}}}\xspace}
\newcommand{\GFTP}{\ensuremath{\operatorname{\textsc{GFtP}}}\xspace}
\newcommand{\ZZ}{\ensuremath{\mathbb{Z}}\xspace}
\newcommand{\RR}{\mathbb{R}\xspace}
\newcommand{\PREDW}{\ensuremath{\hat{w}}\xspace}
\newcommand{\TRUEW}{\ensuremath{w}\xspace}
\newcommand{\PREDPERMW}{\PREDW\xspace}
\newcommand{\TRUEPERMW}{\TRUEW\xspace}
\newcommand{\WMSTI}{\ensuremath{(G,\PREDW, \TRUEW)}\xspace}
\newcommand{\ALG}{\ensuremath{\operatorname{\textsc{Alg}}}\xspace}
\newcommand{\BLAME}{\ensuremath{\operatorname{\textsc{Blame}}}\xspace}
\newcommand{\ENEXT}{\ensuremath{e_{\text{next}}}\xspace}
\newcommand{\EFUTURE}{\ensuremath{e_{\text{future}}}\xspace}
\newcommand{\EMAX}{\ensuremath{e_{\operatorname{max}}}\xspace}
\newcommand{\EOPT}{\ensuremath{e_{\OPT}}\xspace}
\newcommand{\BETA}{\ensuremath{\beta}\xspace}
\newcommand{\ROR}{\ensuremath{\operatorname{\textsc{ror}}}\xspace}
\newcommand{\CR}{\ensuremath{\operatorname{\textsc{cr}}}\xspace}
\newcommand{\PREDCOST}{\ensuremath{\hat{c}}\xspace}
\newcommand{\TRUECOST}{\ensuremath{c}\xspace}
\newcommand{\ETA}{\ensuremath{\eta}\xspace}
\newcommand{\PHI}{\ensuremath{\varphi}\xspace}
\newcommand{\ADV}{\ensuremath{\operatorname{\textsc{Adv}}}\xspace}
\newcommand{\ERR}{\ensuremath{\operatorname{\textsc{Err}}}\xspace}
\title{Online Minimum Spanning Trees \\
with Weight Predictions\,\thanks{Supported
in part by the Independent Research Fund Denmark, Natural Sciences,
grant DFF-0135-00018B
and
in part by the Innovation Fund Denmark,
grant 9142-00001B, Digital Research Centre Denmark,
project P40: Online Algorithms with Predictions.}}
\author{
        \begin{tabular}{l@{\hspace{2em}}l}
          Magnus Berg & Joan Boyar \\[1ex]
          Lene M. Favrholdt & Kim S. Larsen \\[2ex]
        \end{tabular} \\
        University of Southern Denmark \\[2ex]
        \texttt{\{magbp,joan,lenem,kslarsen\}@imada.sdu.dk} 
}
\begin{document}
\maketitle
\begin{abstract}
  We consider the minimum spanning tree problem with predictions,
  using the weight-arrival model, i.e., the graph is given, together
  with predictions for the weights of all edges.  Then the actual
  weights arrive one at a time and an irrevocable decision must be
  made regarding whether or not the edge should be included into the
  spanning tree.
  In order to assess the quality of our algorithms, we
  define an appropriate error measure and analyze the performance of
  the algorithms as a function of the error.
  We prove that, according to competitive analysis, the simplest
  algorithm, Follow-the-Predictions, is optimal.  However,
  intuitively, one should be able to do better, and we present a
  greedy variant of Follow-the-Predictions.  In analyzing that
  algorithm, we believe we present the first random order analysis of
  a non-trivial online algorithm with predictions, by which we obtain
  an algorithmic separation.  This may be useful for distinguishing
  between algorithms for other problems when Follow-the-Predictions is
  optimal according to competitive analysis.

\end{abstract}

\section{Introduction}

The \emph{Minimum Spanning Tree} (MST) problem is one of the classical graph
algorithms problems, where one must select edges from a weighted graph
such that these constitute a spanning tree of minimal weight. We
consider an online version of this problem in the relatively new
context of predictions, a direction that emerged following the
successes of machine learning that has provided more accessible and
reliable predictions.

In the area of online algorithms, we consider problems, many of which
have offline counterparts, where input is presented to an algorithm in
a piece-wise fashion (often referred to as \emph{requests}), and
irrevocable decisions must be made when each item is presented. The
quality of an online algorithm is often assessed using competitive
analysis, which essentially focuses on the worst-case ratio of the cost of the
online algorithm to the cost of an optimal, offline algorithm, \OPT.

When considering graph problems, various models,
inspired by different application scenarios, exist.
In the vertex-arrival model, the requests are the vertices of the graph,
arriving together with the subset of its incident edges that connect to
vertices that have already arrived.
In the edge-arrival model, requests are the edges, identified by their two
endpoints.
For weighted graphs, there is also the \emph{weight-arrival model}, where the
graph is known, and the weights arrive online.
In the vertex-arrival and edge-arrival models, there is only one possible
online algorithm, the one that accepts every edge that does not create a
cycle, since otherwise the algorithm's output might not span the entire graph.
Even in the weight-arrival
model, no deterministic algorithm for online MST can be competitive~\cite{K16}.
This makes the problem hard, but interesting in the context of advice or predictions.

Partially in an attempt to measure how much information about the future is
needed for various online problems, online algorithms with advice
were introduced~\cite{HKK10,EFKR11,DKP09,BFKLM17j}.
In the model used most often, it is an information-theoretical game
of how few oracle-produced bits in total are needed to obtain a particular
competitive ratio or optimality. Obviously, the connection here is that
oracle-based
advice can be considered infallible predictions. The MST problem
has been considered by Bianchi et al.\ in this model~\cite{BBBKP18}.
They obtain results for various arrival models and restricted graph classes, including the
weight-arrival model, but with only two different weights allowed.

The seminal paper by Lykouris and Vassilvitskii~\cite{LV21},
introducing machine-learned advice, which is now more often referred
to as predictions, has inspired rapidly growing~\cite{ALPS} efforts in
the area~\cite{MV22}.  In this context, ideally we want algorithms
to use the predictions and perform optimally when predictions
are correct (referred to as \emph{consistency}), perform as well as
a good online algorithm when predictions are all wrong
(\emph{robustness}), and degrade gracefully from one to the other as
the predictions become increasingly erroneous (\emph{smoothness}).
The ideal situation described above can of course often not be
reached, so one proves upper and lower bounds, as is customary in the
field.  Discussing smoothness requires a definition of error. This is
problem-dependent and requires some thought. We want to distinguish
between good and bad algorithms, and defining error measures that
exaggerate or underestimate the importance of errors leads to
unreliable results.

For the online MST problem with predictions, there are
natural error measures.
We arrive at an error measure, defined as the sum of differences
between the predicted and actual values of the $n-1$ edges (the number of
edges in a spanning tree) with the
largest discrepancies; a measure with desirable properties.

We focus on the MST problem with predictions in the weight-arrival
model.  Our first somewhat surprising result is that with this error
measure (or any of some reasonable alternatives), competitive
analysis~\cite{ST85,KMRS88} cannot distinguish between different,
correct algorithms.  This means that the most na\"{\i}ve algorithm,
Follow-the-Predictions (\FTP), is optimal under this measure, with a
competitive ratio of $1+2\eps$, where $\eps$ is the error, normalized
by the value of \OPT.  Of course, this also means that the perhaps
more reasonable algorithm, we call Greedy Follow-the-Predictions
(\GFTP), that switches to another edge when a revealed actual weight
matches or does better than the predicted weight of an edge it could
replace, is indistinguishable from \FTP under competitive analysis.

In online algorithms, there are other performance
measures one can turn to when competitive analysis is
insufficient, as discussed in~\cite{DL05,BIL15j,BFL20j}.
One of the most well accepted is Random Order Analysis~\cite{K96},
also called the Random Order Model; a chapter in~\cite{R20} discusses some results.
Note that the problem from~\cite{R20} of finding a maximum forest
is not very similar to our problem, since the forest is not required
to be spanning.
The random order analysis technique reduces the power
of the adversary, compared to competitive analysis. In competitive
analysis, the adversary chooses the requests and the order in which
they a presented, while in random order analysis, the adversary
chooses the requests, but those requests are presented to the
algorithm uniformly at random.
Using  random order analysis, we establish a separation between \FTP and \GFTP.
We believe this is the first time random order analysis has been applied in
the context of predictions.

\section{Preliminaries}

Given an online algorithm $\ALG$ for an online minimization problem $\Pi$, and an instance $I$ of $\Pi$, we let $\ALG[I]$ denote $\ALG$'s solution on instance $I$, and $\ALG(I)$ denote the cost of $\ALG[I]$. Then, the \textit{competitive ratio} of $\ALG$ is
\begin{align*}
\CR_{\ALG} = \inf\{ c\ | \ \exists b \colon \forall I \colon \ALG(I)\leq c\OPT(I)-b\}.
\end{align*}
When online algorithms have access to a predictor, a further parameter is introduced into the problem, namely the accuracy of that predictor. Throughout this paper, we let $\ETA$ be the error measure that computes the quality of the predictions, and we let $\eps = \frac{\ETA}{\OPT}$ be the normalized error measure. Our error measure is defined in Definition~\ref{def:mu}. 

Given an online algorithm with predictions, $\ALG$, we express the competitive ratio of $\ALG$ as a function of $\eps$, and evaluate it based on the three criteria: \textit{consistency}, \textit{robustness}, and \textit{smoothness}. Following \cite{LV21}, we define consistency as $\ALG$'s competitive ratio, when the prediction error is $0$. $\ALG$ is \emph{$\gamma$-consistent} if there exists a constant, $\gamma$, such that
$\CR_{\ALG}(0) = \gamma$.

An algorithm is \emph{robust} if its competitive ratio is as good as the best online algorithm's (without predictions), independently of how poor the predictions are. In our case, no online algorithm can be competitive, so our algorithms are trivially robust.

As $\eps$ grows, the competitive ratio of $\ALG$ will decay as a
function of $\eps$. For a function, $f$, we say that $\ALG$ is \emph{$f$-smooth}, if
$\CR_{\ALG}(\eps) \leqslant f(\eps)$ for all $\eps$.

\subsection{Random Order Analysis}

Given an online algorithm, $\ALG$, for a problem, $\Pi$, and an
instance of $\Pi$ with request sequence $I = \langle i_1,i_2,\ldots,i_n\rangle$, a
permutation $\sigma$ of $I$
is chosen uniformly at random, and $\sigma(I)$ is presented to
$\ALG$.
The \textit{random order ratio} of $\ALG$ is defined as
\begin{align*}
\ROR_{\ALG} = \inf\{ c \ | \ \exists b \colon \forall I \colon \mathbb{E}_{\sigma}[\ALG(\sigma(I))] \leq c\OPT(I)-b\},
\end{align*}
As with the competitive ratio, we express the random order ratio of algorithms with predictions as a function of $\eps$.

\subsection{Weight-Arrival MST Problem}

The offline MST problem is a thoroughly studied problem, for which efficient optimal algorithms are known. Given a graph $G = (V,E,w)$, the task is to find a spanning tree $T$ for $G$ that minimizes the objective function
\begin{align*}
c(T) = \sum_{e \in E(T)} w(e).
\end{align*}
For the MST problem in the weight-arrival model (WMST), online algorithms are initially provided with the underlying graph $G = (V,E)$, and then the weights of the edges in $G$ arrive online. At the time the true weight of an edge $e$ arrives, the online algorithm has to irrevocably accept or reject $e$ for its final tree.
We focus on the WMST problem where we assume that an online algorithm has access to predicted weights for all edges in $G$ before the online computation is initiated.

\subsection{Notation and Nomenclature}

We use the notation $\RR^+$ and $\ZZ^+$ to denote the positive real numbers and the positive integers, respectively.
Graphs, in the following, are weighted, simple, connected and undirected, with positive real weights. Given a graph $G$, we set $n = \abs{V(G)}$ and $m = \abs{E(G)}$. For any clarification on graph theory, we refer to \cite{W01}. Further, we define a \textit{WMST-instance} to be a triple $\WMSTI$ consisting of a graph $G$, and two maps $\PREDW \colon E(G) \rightarrow \RR^+$ and $\TRUEW \colon E(G) \rightarrow \RR^+$, defining for each edge $e \in E(G)$, a predicted weight $\PREDW(e)$ and a true weight $\TRUEW(e)$. Given a graph $G$ and a tree $T \subset G$, when writing $T$, we implicitly refer to $E(T)$. Moreover, we let
\begin{align*}
\PREDCOST(T)=\sum_{e \in T} \PREDW(e)\hspace{0.5cm} \text{and} \hspace{0.5cm} \TRUECOST(T)=\sum_{e \in T} \TRUEW(e).
\end{align*}
Given an algorithm with predictions, $\ALG$, for the WMST problem, and a WMST-instance $\WMSTI$, we let $\ALG[\PREDW(E(G)),\TRUEW(E(G))]$ denote the tree that $\ALG$ outputs. When $G$ is clear from the context, we write $\ALG[\PREDW,\TRUEW]$ and let $\ALG(\PREDW,\TRUEW)$ denote the cost of $\ALG[\PREDW,\TRUEW]$. We let $\OPT[\TRUEW]$ be an optimal MST of $G$, and $\OPT[\PREDW]$ be an optimal MST of $G$ using the predicted weights $\PREDW$.

\subsection{Pictorial Representations of WMST-Instances}
Given a WMST-instance $\WMSTI$, when representing $G$ pictorially, we denote the predicted and true weights of an edge $e \in E(G)$ by $\PREDW(e) \rightarrow \TRUEW(e)$. Thus, the WMST-instance $\WMSTI$ given by $V(G) = \{u_1,u_2,u_3\}$, $E(G) = \{(u_1,u_2),(u_2,u_3),(u_1,u_3)\}$, 
\begin{align*}
\PREDW(e) = \begin{cases}
3, &\mbox{if $e = (u_2,u_3)$,} \\
2, &\mbox{otherwise}
\end{cases} \hspace{0.5cm} \text{and} \hspace{0.5cm} \TRUEW(e) = \begin{cases}
2, &\mbox{if $e = (u_1,u_3)$,} \\
1, &\mbox{otherwise}
\end{cases}
\end{align*}
may be pictorially represented by
\begin{center}
\begin{tikzpicture}
\node (u1) at (0,0) {$u_1$};
\node (u2) at (1.5,2) {$u_2$};
\node (u3) at (3,0) {$u_3$};

\draw (u1) -- node[left, pos = 0.6] {$2 \rightarrow 1\ $} (u2);
\draw (u2) -- node[right, pos = 0.4] {$\ 3 \rightarrow 1$} (u3);
\draw (u3) -- node[below] {$2 \rightarrow 2$} (u1);
\end{tikzpicture}
\end{center}

\subsection{Measure Comparison}

When selecting the error measure for evaluating the quality of a specific prediction scheme, one has to ensure that the error measure satisfies certain desirable properties, and that it picks up salient features of the specific problem. In our case, a natural first idea is, given a WMST-instance $\WMSTI$, to define
\begin{align*}
\ETA_1(\PREDW,\TRUEW) = \sum_{e \in E(G)} \abs{\PREDW(e) - \TRUEW(e)}.
\end{align*}
This choice, however, suffers the flaw that it cannot separate our algorithms, and, by the definition of $\ETA_1$, dense graphs will have potential for unreasonably large prediction errors. In \cite{PSK18}, Kumar, Purohit, and Svitkina suggest the same measure for Non-Clairvoyant Scheduling, where they sum over the prediction error of each job size. Based on their work, Im et al.\ \cite{IKQP21} propose an alternative measure for the same problem, having more desirable properties, and were sensitive to further important problem-specific parameters. In particular, Im et al.\ suggest that error measures be \emph{monotone} and satisfy a \emph{Lipschitz}-like property defined as follows.

\begin{definition}
Let $\WMSTI$ be a WMST-instance. Then, an error measure, $\ERR(\PREDW(E(G)),\TRUEW(E(G)))$, is said to be \emph{monotone} if, for all subgraphs $G' \subset~G$, 
\begin{align*}
\ERR(\PREDW(E(G) \setminus E(G')) \cup \TRUEW(E(G')) , \TRUEW(E(G))) \leqslant \ERR(\PREDW(E(G)),\TRUEW(E(G))).
\end{align*}
\end{definition}

In words, an error measure, $\ERR$, is said to be monotone if the action of correcting a subset of predicted weights to the correct weights does not increase the value of the error.

\begin{definition}
Let $\WMSTI$ be a WMST-instance. Then, an error measure, $\ERR(\PREDW(E(G)),\TRUEW(E(G)))$, is said to be \emph{Lipschitz} if, 
\begin{align*}
\abs{\OPT(\PREDW(E(G))) - \OPT(\TRUEW(E(G)))} \leqslant \ERR(\PREDW(E(G)),\TRUEW(E(G)))
\end{align*}
\end{definition}

Specifically, Im et al.\ suggest a measure, $\ETA_2$. In our setting, we can define $\ETA_2(\PREDW,\TRUEW)$ as
\[\OPT(\{\PREDW(e)\}_{e\in E_o} \cup \{\TRUEW(e)\}_{e \in E_u}) - \OPT(\{\TRUEW(e)\}_{e \in E_o} \cup \{\PREDW(e)\}_{e \in E_u}),
\]
where $E_o = \{e \in E(G) \mid \PREDW(e) > \TRUEW(e)\}$ and $E_u =
E(G) \setminus E_o$. In our setting, this measure also fails to
distinguish algorithms. In particular, no online algorithm can have a
competitive ratio that is a function of $\eps_2$, or even $\ETA_2$. 

\begin{theorem}
For any deterministic online algorithm with predictions, $\ALG$, for the WMST problem, and any function, $f$, there exists a WMST-instance $\WMSTI$ such that
\begin{align*}
\frac{\ALG(\PREDW,\TRUEW)}{\OPT(\TRUEW)} > f(\ETA_2).
\end{align*}
\end{theorem}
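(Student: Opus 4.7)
The plan is to construct, for any deterministic $\ALG$ and any function $f$, a very small bad instance on the triangle $K_3$ with predictions $\hat{w}\equiv 1$, where I can inflate the ratio $\ALG(\hat{w}, w)/\OPT(w)$ without changing $\eta_2$. The underlying reason is that on uniform predictions the predictor conveys no information about which spanning tree is truly optimal, whereas $\eta_2 = \OPT(\max\{\hat{w},w\}) - \OPT(\min\{\hat{w},w\})$ only cares about aggregate max/min MST costs; this quantity can remain $0$ or at some algorithm-dependent constant even while the adversary steers $\ALG$ into an arbitrarily costly tree.

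Take $K_3$ on $\{v_1,v_2,v_3\}$ with edges $e_1,e_2,e_3$, set $\hat{w}(e_i)=1$ for every $i$, and present $e_1,e_2,e_3$ in that fixed order. I would split on $\ALG$'s response to $e_1$. Case~1: there is some weight $w^*>0$ at which $\ALG$ rejects $e_1$. The adversary plays $w(e_1)=w^*$, $w(e_2)=M$, $w(e_3)=1$ for a free parameter $M\gg 1$; since any spanning tree of $K_3$ has exactly two edges and $\ALG$ has already rejected one, a valid $\ALG$ is now forced to accept both $e_2$ and $e_3$, producing cost $M+1$, while $\OPT$ picks $\{e_1,e_3\}$ of cost $w^*+1$. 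Case~2: $\ALG$ accepts $e_1$ at every possible weight. The adversary plays $w(e_1)=M$, $w(e_2)=w(e_3)=1$; $\ALG$ must then accept exactly one of $e_2,e_3$ (otherwise its output is either a single edge or the full triangle), so its tree costs $M+1$, whereas $\OPT=\{e_2,e_3\}$ costs $2$.

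A direct calculation of $\OPT(\max\{\hat{w},w\})$ and $\OPT(\min\{\hat{w},w\})$ then gives $\eta_2 = |w^*-1|$ in Case~1 and $\eta_2 = 0$ in Case~2, so in both cases $\eta_2$ depends only on $\ALG$, not on $M$. Hence sending $M\to\infty$ drives the ratio $(M+1)/(w^*+1)$ or $(M+1)/2$ above $f(\eta_2)$ for any prescribed $f$. The step that needs care is the forced-accept argument in Case~1: because only $e_2$ and $e_3$ remain after $e_1$ is rejected, and a spanning tree of $K_3$ needs two edges, any valid deterministic algorithm has no alternative but to accept $e_2$ (and then $e_3$), regardless of the true weight placed on $e_2$. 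This is what lets the adversary introduce the parameter $M$ into $\ALG$'s cost without affecting $\eta_2$.
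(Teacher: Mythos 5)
Your proof is correct and takes essentially the same approach as the paper: a triangle with uniform predictions $\PREDW\equiv 1$, an adaptive adversary that branches on whether $\ALG$ accepts or rejects the first revealed edge, and an inflation parameter (your $M$, the paper's $k$ or $K$) placed on an edge the algorithm is then forced to take, keeping $\ETA_2$ independent of the inflation. The only cosmetic difference is that you phrase the case split as ``does there exist a rejecting weight $w^*$'' while the paper fixes a probe value $k$ up front; the underlying construction and the reason $\ETA_2$ stays bounded are identical.
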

\begin{proof}
  For any $k \in \ZZ^+$, consider the WMST-instance $(G_{k,K},\PREDW_{k,K},\TRUEW_{k,K})$, where $K$ depends on the actions of \ALG:
\begin{center}
\begin{tikzpicture}
\node (1) at (0,0) {$v_1$};
\node (2) at (1.5,2) {$v_2$};
\node (3) at (3,0) {$v_3$};

\draw (1) -- node[left, pos = 0.6] {$1 \rightarrow k\ $} (2);
\draw (1) -- node[below] {$1 \rightarrow 1$} (3);
\draw (2) -- node[right, pos = 0.4] {$\ 1 \rightarrow K$} (3);
\end{tikzpicture}
\end{center}
First, the adversary, $\ADV$, reveals $\TRUEW_{k,K}((v_1,v_2)) =
k$. We now have two cases:

\textbf{Case ($\ALG$ accepts $\bm{{(v_1,v_2)}}$):} In this case, $\ADV$ sets $K=1$, and so
\begin{itemize}[label = {-}]
\item $\ALG(\PREDW_{k,1},\TRUEW_{k,1}) = k+1$.
\item $\OPT(\TRUEW_{k,1}) = 2$.
\item $\ETA_2(\PREDW_{k,1},\TRUEW_{k,1}) = 0$.
\end{itemize}
Hence, 
\begin{align*}
\frac{\ALG(\PREDW_{k,1},\TRUEW_{k,1})}{\OPT(\TRUEW_{k,1})} = \frac{k+1}{2}.
\end{align*}
Since $k$ can be arbitrarily large, and since $\ETA_2=0$, this
fraction cannot be bounded by any function of $\ETA_2$.

\noindent \textbf{Case ($\ALG$ rejects $\bm{{(v_1,v_2)}}$):} In this
case, $\ADV$ chooses $K > k$, and so
\begin{itemize}[label = {-}]
\item $\ALG(\PREDW_{k,K},\TRUEW_{k,K}) = K+1$.
\item $\OPT(\TRUEW_{k,K}) =  k + 1$.
\item $\ETA_2(\PREDW_{k,K},\TRUEW_{k,K}) = k - 1$.
\end{itemize}
Hence, 
\begin{align*}
\frac{\ALG(\PREDW_{k,K},\TRUEW_{k,K})}{\OPT(\TRUEW_{k,K})} =
\frac{K+1}{k+1} = \frac{K+1}{\ETA_2+2}.
\end{align*}
Since $K$ may be chosen arbitrarily large, independently of $k$, and
hence of $\ETA_2$,
this fraction cannot be bounded by a function of $\eta_2$.
\end{proof}

We use the following measure, denoted by $\ETA$, selected due to its desirable properties and its ability to distinguish between algorithms under random order analysis.

\begin{definition}\label{def:mu}
Let $\WMSTI$ be any WMST-instance, and let $e_1,e_2,\ldots,e_m$ be any ordering of $E(G)$. Furthermore, let $\{p_i\}_i$ be the sequence where $p_i := \abs{\TRUEW(e_i) - \PREDW(e_i)}$, and let $\{p_{i_j}\}_j$ be the sequence $\{p_i\}_i$, sorted such that $p_{i_1} \geqslant p_{i_2} \geqslant \cdots \geqslant p_{i_m}$. The \emph{error}, $\ETA$, is given by
\begin{align*}
\ETA(\PREDW,\TRUEW) = \sum_{j=1}^{n-1} p_{i_j}.
\end{align*}
When $\WMSTI$ is clear from context, we write $\ETA$ for $\ETA(\PREDW,\TRUEW)$. The \emph{normalized error} is $\eps = \frac{\ETA}{\OPT}$. 
\end{definition}

Note that $n-1$ is the number of edges in a spanning tree.
Thus, the risk of unreasonably large prediction errors for dense graphs as with possible other error measures has been eliminated. This measure also satisfies the monotonicity and Lipschitzness properties from~\cite{IKQP21}. 

\begin{theorem}
$\ETA$ is monotone and Lipschitz.
\end{theorem}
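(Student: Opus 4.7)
The plan is to handle the two properties separately, since they require different arguments despite both following from the fact that $\ETA$ is defined as the sum of the $n-1$ largest absolute discrepancies.

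For monotonicity, I would observe that correcting some predictions on $E(G')$ amounts to setting $p_e := \abs{\TRUEW(e) - \PREDW(e)}$ to $0$ for every $e \in E(G')$, while leaving the values $p_e$ for $e \in E(G) \setminus E(G')$ unchanged. The new sorted sequence is pointwise dominated by the old one (zeroing out entries in a multiset of nonnegative reals can only make each order statistic smaller or equal). Summing the top $n-1$ order statistics therefore gives a value at most the original $\ETA$. This is essentially a one-line monotonicity-of-partial-sums argument.

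For the Lipschitz property, the plan is a standard MST swapping estimate. Let $T^* = \OPT[\TRUEW]$ and $\hat T^* = \OPT[\PREDW]$; both have exactly $n-1$ edges. By optimality under the respective weight functions,
\begin{align*}
\OPT(\TRUEW) - \OPT(\PREDW) &\leqslant \TRUECOST(\hat T^*) - \PREDCOST(\hat T^*) = \sum_{e \in \hat T^*}\bigl(\TRUEW(e) - \PREDW(e)\bigr), \\
\OPT(\PREDW) - \OPT(\TRUEW) &\leqslant \PREDCOST(T^*) - \TRUECOST(T^*) = \sum_{e \in T^*}\bigl(\PREDW(e) - \TRUEW(e)\bigr).
\end{align*}
Each right-hand side is bounded above by the sum, over a set of $n-1$ edges, of $\abs{\TRUEW(e) - \PREDW(e)}$, and any such sum is at most the sum of the $n-1$ largest values $p_{i_1},\ldots,p_{i_{n-1}}$ in the full sequence, which is precisely $\ETA(\PREDW,\TRUEW)$. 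Combining the two inequalities yields $\abs{\OPT(\PREDW) - \OPT(\TRUEW)} \leqslant \ETA(\PREDW,\TRUEW)$.

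Neither step is really an obstacle; the only mild care needed is to make explicit that a sum of $n-1$ nonnegative terms drawn from the multiset $\{p_e\}_{e \in E(G)}$ is bounded by the sum of its top $n-1$ elements, which is immediate from the definition of $\ETA$. I expect the whole proof to be short — essentially the two observations above, written out with the notation from Definition~\ref{def:mu}.
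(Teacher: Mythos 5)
Your proposal is correct and follows essentially the same approach as the paper: Lipschitzness via optimality of each tree under its own weight function combined with the bound that any sum of $n-1$ per-edge discrepancies is at most $\ETA$. For monotonicity your all-at-once order-statistic-domination argument is a slightly more direct packaging of the paper's one-correction-at-a-time argument, but the underlying idea is the same.
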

\begin{proof}
\textbf{Towards monotonicity:} Given a WMST-instance $\WMSTI$ and any enumeration $e_1,e_2,\ldots,e_m$ of $E(G)$, we set $p_i := \abs{\TRUEW(e_i) - \PREDW(e_i)}$ obtaining a sequence $\{p_i\}_{i}$ of prediction errors. Now, sort $\{p_i\}_i$ in non-increasing order, to obtain $\{p_{i_j}\}_{j}$. Then,
\begin{align*}
\ETA(\PREDW,\TRUEW) = \sum_{j=1}^{n-1} p_{i_j}.
\end{align*}
Correcting predictions by setting $\PREDW(e_i) := \TRUEW(e_i)$, for some $i \in \{1,2,\ldots,m\}$, cannot make $\ETA$ increase. Indeed, if $p_i = \abs{\PREDW(e_i) - \TRUEW(e_i)}$ did not contribute to $\ETA$ before correcting $\PREDW(e_i)$, then $\ETA$ remains unchanged after the correction. If, on the other hand, $p_i$ contributed to $\ETA$ before, we find that instead of $p_i$, now the $n$th largest prediction error, before correcting $\PREDW(e_i)$, will contribute to $\ETA$ instead. Since $p_n \leqslant p_i$, it follows that $\ETA$ can only either remain unchanged or decrease after the correction. 

\textbf{Towards Lipschitzness:} We show that $\abs{\PREDCOST(T_{\FTP}) - \TRUECOST(T_{\OPT})} \leqslant \ETA$, which is equivalent to
\begin{align}\label{eq:lipschitzness}
\TRUECOST(T_{\OPT}) - \ETA \leqslant \PREDCOST(T_{\FTP}) \leqslant \TRUECOST(T_{\OPT}) + \ETA.
\end{align}
We prove the two inequalities separately. To this end, by the minimality of $\OPT$, observe that
\begin{enumerate}[label = {(\roman*)}]
\item $\PREDCOST(T_{\FTP}) \leqslant \PREDCOST(T_{\OPT})$, and
\item $\TRUECOST(T_{\OPT}) \leqslant \TRUECOST(T_{\FTP})$. 
\end{enumerate}
Moreover, by the observations in Theorem~\ref{thm:cr_ftp_upper_bound}, we find that
\begin{enumerate}[label = {(\alph*)}]
\item $\TRUECOST(T_{\FTP}) \leqslant \PREDCOST(T_{\FTP}) + \ETA$, and
\item $\PREDCOST(T_{\OPT}) - \ETA \leqslant \TRUECOST(T_{\OPT})$. 
\end{enumerate}
Now, by (a) and (ii), it follows that $\TRUECOST(T_{\OPT}) \leqslant \PREDCOST(T_{\FTP}) + \ETA$, which is equivalent to $\TRUECOST(T_{\OPT}) - \ETA \leqslant \PREDCOST(T_{\FTP})$, implying the leftmost inequality in Equation~\eqref{eq:lipschitzness}. Similarly, by (b) and (i), it follows that $\PREDCOST(T_{\FTP}) - \ETA \leqslant \TRUECOST(T_{\OPT})$, which is equivalent to $\PREDCOST(T_{\FTP}) \leqslant \TRUECOST(T_{\OPT}) + \ETA$, which implies the rightmost inequality in Equation~\eqref{eq:lipschitzness}. 
\end{proof}

\section{Optimal Algorithms under Competitive Analysis}

We prove that our two algorithms $\FTP$ and $\GFTP$, defined in
Algorithms~\ref{alg:ftp} and \ref{alg:gftp}, respectively, are
$1$-consistent and ($1+2\eps$)-smooth algorithms and that this is best
possible.

\subsection{Upper Bounds}

First, we focus on the simplest algorithm, called Follow-the-Predictions ($\FTP$), defined in Algorithm~\ref{alg:ftp}. 

\begin{algorithm}[ht!]
\caption{$\FTP$}
\begin{algorithmic}[1]
\STATE \textbf{Input:} A WMST-instance $\WMSTI$
\STATE Let $T$ be a MST of $G$ using $\PREDW$
\WHILE {receiving inputs $(\TRUEW(e_i),e_i)$}
	\IF {$e_i \in T$}
		\STATE Accept $e_i$ \COMMENT{Add $e_i$ to the solution}
	\ENDIF
\ENDWHILE
\end{algorithmic}
\label{alg:ftp}
\end{algorithm}

For brevity, we set $T_{\FTP} = \OPT[\PREDW]$ and $T_{\OPT} = \OPT[\TRUEW]$.

\begin{theorem}\label{thm:cr_ftp_upper_bound}
$\CR_{\FTP}(\eps) \leqslant 1 + 2 \eps$.
\end{theorem}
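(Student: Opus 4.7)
The plan is to chain together three inequalities: $\TRUECOST(T_{\FTP})\le \PREDCOST(T_{\FTP})+\ETA$, then $\PREDCOST(T_{\FTP})\le \PREDCOST(T_{\OPT})$, then $\PREDCOST(T_{\OPT})\le \TRUECOST(T_{\OPT})+\ETA$. Dividing by $\OPT(\TRUEW)=\TRUECOST(T_{\OPT})$ and using $\eps=\ETA/\OPT$ then yields the claimed bound of $1+2\eps$.

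The central lemma, which I would establish first, is that for \emph{any} spanning tree $T$ of $G$,
\begin{align*}
\abs{\TRUECOST(T)-\PREDCOST(T)} \;\leqslant\; \sum_{e\in T}\abs{\TRUEW(e)-\PREDW(e)} \;\leqslant\; \ETA.
\end{align*}
The first step is the triangle inequality. The second step is the crucial use of how \ETA{} is defined: a spanning tree has exactly $n-1$ edges, and \ETA{} is by definition the sum of the $n-1$ largest values of $p_i=\abs{\TRUEW(e_i)-\PREDW(e_i)}$, so any selection of $n-1$ edges contributes at most \ETA{} in absolute discrepancy. Applied to $T_{\FTP}$ this gives observation (a), and applied to $T_{\OPT}$ it gives (b).

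Next I would invoke the definition of $T_{\FTP}=\OPT[\PREDW]$: since $T_{\FTP}$ is an MST of $G$ with respect to the predicted weights, $\PREDCOST(T_{\FTP})\leqslant \PREDCOST(T_{\OPT})$. Concretely, the chain
\begin{align*}
\TRUECOST(T_{\FTP}) \;\leqslant\; \PREDCOST(T_{\FTP})+\ETA \;\leqslant\; \PREDCOST(T_{\OPT})+\ETA \;\leqslant\; \TRUECOST(T_{\OPT})+2\ETA
\end{align*}
holds, where the first step is (a), the second is optimality of $T_{\FTP}$ under $\PREDW$, and the third is (b) rewritten as $\PREDCOST(T_{\OPT})\leqslant \TRUECOST(T_{\OPT})+\ETA$.

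Finally, dividing by $\OPT(\TRUEW)=\TRUECOST(T_{\OPT})$ gives
\begin{align*}
\frac{\FTP(\PREDW,\TRUEW)}{\OPT(\TRUEW)} \;\leqslant\; 1+\frac{2\ETA}{\OPT(\TRUEW)} \;=\; 1+2\eps,
\end{align*}
which is the desired bound. I do not expect any real obstacle here; the only subtle point is making sure the reader sees why the edge-wise sum $\sum_{e\in T}\abs{\TRUEW(e)-\PREDW(e)}$ is bounded by \ETA{} for every spanning tree $T$, which hinges purely on $|E(T)|=n-1$ matching the truncation parameter in Definition~\ref{def:mu}.
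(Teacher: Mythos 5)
Your proof is correct and follows essentially the same route as the paper's: bound $\TRUECOST(T_{\FTP})$ and $\PREDCOST(T_{\OPT})$ against their counterparts by $\ETA$ using that any spanning tree has $n-1$ edges (matching the truncation in Definition~\ref{def:mu}), chain through the $\PREDW$-optimality of $T_{\FTP}$, and divide by $\OPT(\TRUEW)$. Packaging the per-tree discrepancy bound as a standalone lemma for arbitrary spanning trees is a minor stylistic generalization of what the paper does inline, not a different argument.
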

\begin{proof}
First, note that
\begin{align*}
\TRUECOST(T_{\FTP}) - \PREDCOST(T_{\FTP}) = \sum_{e \in T_{\FTP}} \left( \TRUEW(e) - \PREDW(e) \right) \leqslant \sum_{e \in T_{\FTP}} \abs{\TRUEW(e) - \PREDW(e)} \leqslant \ETA.  
\end{align*}
A similar argument shows that $\PREDCOST(T_{\OPT}) \leqslant \TRUECOST(T_{\OPT}) + \ETA$. By the minimality of $T_{\FTP}$ with respect to $\PREDW$, it follows that
\begin{align}\label{eq:upper_bound_computation}
\TRUECOST(T_{\FTP}) - \ETA \leqslant \PREDCOST(T_{\FTP}) \leqslant \PREDCOST(T_{\OPT}) \leqslant \TRUECOST(T_{\OPT}) + \ETA,
\end{align}
and, therefore, that
$\TRUECOST(T_{\FTP}) \leqslant \TRUECOST(T_{\OPT}) + 2 \ETA.$
Since $\TRUECOST(T_{\FTP}) = \FTP(\PREDW)$ and $\TRUECOST(T_{\OPT}) = \OPT(\TRUEW)$, it follows that
\begin{align*}
\frac{\FTP(\PREDW)}{\OPT(\TRUEW)} \leqslant 1 + 2 \eps.
\end{align*}
\end{proof}

We also present a non-trivial algorithm, called Greedy-\FTP (\GFTP)
that starts by producing the tree that \FTP outputs. Whenever the true
weight of an edge, $e$, that is not contained in $\GFTP$'s current
tree is revealed, the algorithm checks whether $e$ can replace an edge in its current tree. It does so by comparing the predicted weights of a subset of edges in its current tree by the newly revealed true weight. We formalize the strategy of \GFTP in Algorithm~\ref{alg:gftp}.

\begin{algorithm}[ht!]
\caption{$\GFTP$}
\begin{algorithmic}[1]
\STATE \textbf{Input:} A WMST-instance $\WMSTI$
\STATE Let $T$ be a MST of $G$ using $\PREDW$
\STATE $U = E(G)$ \COMMENT{$U$ contains the \textit{unseen} edges}
\WHILE {receiving inputs $(\TRUEW(e_i),e_i)$}
	\STATE $U = U \setminus \{e_i\}$ 
	\IF {$e_i \in T$}
		\STATE Accept $e_i$ \COMMENT{Add $e_i$ to the solution} \label{alg-in-accept}
	\ELSE[$e_i \not\in T$]
		\STATE $C$ is the cycle $e_i$ introduces in $T$
		\STATE $C' = U \cap C$ 
		\IF {$C' \neq \emptyset$}
			\STATE $\EMAX = \text{arg}\,\max_{e_j \in C'}\{\PREDW(e_j)\}$
			\IF {$\TRUEW(e_i) \leqslant \PREDW(\EMAX)$} 
				\STATE $T = (T \setminus \{\EMAX\}) \cup \{e_i\}$ \COMMENT{Update $T$}
				\STATE Accept $e_i$ \COMMENT{Add $e_i$
                                  to the solution} \label{alg-out-accept}
			\ENDIF
		\ENDIF
	\ENDIF
\ENDWHILE
\end{algorithmic}
\label{alg:gftp}
\end{algorithm}

Throughout, we set $T_{\GFTP} = \GFTP[\PREDW,\TRUEW]$. Further, we denote by $T$ the tree that $\GFTP$ makes online changes to. Note that initially $T = T_{\FTP}$, and after $\GFTP$ has processed the full input sequence, $T = T_{\GFTP}$. Finally, we denote by $U$ the collection of \emph{unseen} edges in $E(G)$.

\begin{lemma}\label{lem:cftp_compared_to_ftp}
For any WMST-instance $\WMSTI$, $\TRUECOST(T_{\GFTP}) \leqslant \PREDCOST(T_{\FTP}) + \ETA$. 
\end{lemma}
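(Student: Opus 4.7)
The plan is to partition $T_{\FTP}$ by how \GFTP handles each edge and to organize the algorithm's swaps into chains that telescope cleanly. Let $A_1 \subseteq T_{\FTP}$ be the edges \GFTP never removes, $A_2 \subseteq T_{\FTP}$ the edges it removes but later re-inserts, and $R \subseteq T_{\FTP}$ the edges it removes permanently. Then $T_{\GFTP} = A_1 \cup A_2 \cup B$, where $B = T_{\GFTP} \setminus T_{\FTP}$ consists of edges inserted via the else-branch that survive until termination. Index the swap events of \GFTP by $j = 1, 2, \ldots, k$; at swap $j$, let $a_j$ denote the inserted edge and $r_j = \EMAX$ the removed edge, so the algorithm guarantees $\TRUEW(a_j) \leqslant \PREDW(r_j)$.

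The first key observation is that every $r_j$ lies in $T_{\FTP}$. Indeed, $r_j \in C' = U \cap C$, so $r_j$ is still unseen at the moment of removal; it cannot have entered $T$ via any earlier swap (which would have required it to have been processed), so being currently in the tree it must have belonged to $T_{\FTP}$ from the start.

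Next, I would build a directed graph $H$ on the swap events with an arc $j \to k$ whenever $r_j = a_k$. Since each swap has at most one incoming and one outgoing arc in $H$, this graph decomposes into vertex-disjoint directed paths. A path $j_1 \to j_2 \to \cdots \to j_p$ begins with $a_{j_1} \in B$ (the edge inserted at $j_1$ is never subsequently removed), ends with $r_{j_p} \in R$, and has intermediate edges $a_{j_i} = r_{j_{i-1}} \in A_2$ for $2 \leqslant i \leqslant p$. Chaining the per-swap inequalities $\TRUEW(a_{j_i}) \leqslant \PREDW(r_{j_i}) = \PREDW(a_{j_{i+1}})$ for $i < p$ with $\TRUEW(a_{j_p}) \leqslant \PREDW(r_{j_p})$ and rearranging yields
\[
\TRUEW(a_{j_1}) \leqslant \PREDW(r_{j_p}) + \sum_{i=2}^{p} \bigl(\PREDW(a_{j_i}) - \TRUEW(a_{j_i})\bigr).
\]

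Summing over all paths and adding $\sum_{e \in A_1 \cup A_2} \TRUEW(e)$, the identity $\TRUEW(e) + (\PREDW(e) - \TRUEW(e)) = \PREDW(e)$ makes the $A_2$-contributions collapse, giving $\TRUECOST(T_{\GFTP}) \leqslant \TRUECOST(A_1) + \PREDCOST(A_2 \cup R) = \PREDCOST(T_{\FTP}) + \sum_{e \in A_1}(\TRUEW(e) - \PREDW(e))$. Since $|A_1| \leqslant n - 1$, the remaining sum is at most $\sum_{e \in T_{\FTP}} \abs{\TRUEW(e) - \PREDW(e)} \leqslant \ETA$, completing the proof. The main obstacle is precisely achieving this telescoping: a naive per-swap accounting that bounds each $\TRUEW(a_j)$ by $\PREDW(r_j)$ in isolation double-counts the prediction error on the $A_2$-edges and only yields a weaker bound of $\PREDCOST(T_{\FTP}) + 2\ETA$.
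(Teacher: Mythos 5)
Your proof is correct, but it takes a genuinely different and more elaborate route than the paper. The paper defines a bijection $\PHI\colon T_{\FTP}\to T_{\GFTP}$ directly: $\PHI(e)=e$ if $e$ was accepted while still in $T$, and $\PHI(e)=e'$ if $e$ was swapped out for $e'$. The swap condition gives $\TRUEW(\PHI(e))\leqslant\PREDW(e)$ whenever $\PHI(e)\neq e$, so each term of $\sum_{e\in T_{\FTP}}(\TRUEW(\PHI(e))-\PREDW(e))$ is bounded by $\abs{\TRUEW(e)-\PREDW(e)}$, and the lemma follows at once. This pairing is exactly your ``naive per-swap accounting'': each removed edge $r_j$ is matched to its immediate replacement $a_j$. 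Your chain decomposition of the swap-event graph $H$ is a valid alternative that traces the lineage of each $B$-edge back through $A_2$ to an $R$-edge, and your telescoping is carried out correctly; both approaches rest on the same two observations (only $T_{\FTP}$-edges are ever removed, and $\TRUEW(a_j)\leqslant\PREDW(r_j)$).

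However, your concluding remark is wrong: the per-swap bound does \emph{not} lose a factor of two. Summing $\TRUEW(a_j)\leqslant\PREDW(r_j)$ over all swaps gives $\TRUECOST(B)+\TRUECOST(A_2)\leqslant\PREDCOST(R)+\PREDCOST(A_2)$, and hence $\TRUECOST(T_{\GFTP})\leqslant\TRUECOST(A_1)+\PREDCOST(A_2\cup R)=\PREDCOST(T_{\FTP})+\sum_{e\in A_1}(\TRUEW(e)-\PREDW(e))\leqslant\PREDCOST(T_{\FTP})+\ETA$ --- identical to what your telescoping produces. There is no double-counting of $A_2$: each $A_2$-edge contributes its predicted weight once (as some $r_j$) and its true weight once (as part of $T_{\GFTP}$), and these cancel. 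So the chain structure, while a nice way to organize the argument, is not needed to avoid a $2\ETA$ penalty, and the paper's simpler one-step bijection suffices.
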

\begin{proof}
There exists a bijection $\PHI \colon T_{\FTP} \rightarrow T_{\GFTP}$, where, for each $e \in T_{\FTP}$, 
\begin{align*}
\PHI(e) = \begin{cases}
e, &\mbox{if $\GFTP$ accepted $e \in T$ in Line~\ref{alg-in-accept} of Algorithm~\ref{alg:gftp}}, \\
e', &\mbox{if $\GFTP$ swapped out $e$ for $e'$ in Line~\ref{alg-out-accept} of Algorithm~\ref{alg:gftp}}. 
\end{cases}
\end{align*}
Clearly, $\varphi$ is a surjection, and since $\abs{T_{\FTP}} = \abs{T_{\GFTP}}$, $\varphi$ is bijective. Hence,
\begin{align*}
\TRUECOST(T_{\GFTP}) - \PREDCOST(T_{\FTP}) = \sum_{e \in T_{\FTP}} \TRUEW(\PHI(e)) - \PREDW(e).
\end{align*}
Given an edge $e \in T_{\FTP}$, if $\PHI(e) = e$, then $\TRUEW(\PHI(e)) \leqslant \PREDW(e) + \abs{\TRUEW(e) - \PREDW(e)}$, and so $\TRUEW(\PHI(e)) - \PREDW(e) \leqslant \abs{\TRUEW(e) - \PREDW(e)}$. If $\PHI(e) \neq e$, then, by Algorithm~\ref{alg:gftp}, $\TRUEW(\PHI(e)) \leqslant \PREDW(e)$. Hence,
\begin{align*}
\TRUECOST(T_{\GFTP}) - \PREDCOST(T_{\FTP}) = \sum_{e \in T_{\FTP}} \TRUEW(\PHI(e)) - \PREDW(e) \leqslant \sum_{e \in T_{\FTP}} \abs{\TRUEW(e) - \PREDW(e)} \leqslant \ETA.
\end{align*}
\end{proof}

\begin{theorem}\label{thm:cr_cftp_upper_bound}
$\CR_{\GFTP}(\eps) \leqslant 1 + 2\eps$.
\end{theorem}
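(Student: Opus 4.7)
The proof should follow almost immediately by chaining Lemma~\ref{lem:cftp_compared_to_ftp} with the intermediate bounds already established in the proof of Theorem~\ref{thm:cr_ftp_upper_bound}. My plan is to first invoke Lemma~\ref{lem:cftp_compared_to_ftp} to bound $\TRUECOST(T_{\GFTP})$ by $\PREDCOST(T_{\FTP}) + \ETA$, and then use the chain in Equation~\eqref{eq:upper_bound_computation} to bound $\PREDCOST(T_{\FTP})$ by $\TRUECOST(T_{\OPT}) + \ETA$ (which follows from the minimality of $T_{\FTP}$ with respect to $\PREDW$ combined with the observation that $\PREDCOST(T_{\OPT}) \leqslant \TRUECOST(T_{\OPT}) + \ETA$).

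Concretely, the steps I would execute are: (1) apply Lemma~\ref{lem:cftp_compared_to_ftp} to get $\TRUECOST(T_{\GFTP}) \leqslant \PREDCOST(T_{\FTP}) + \ETA$; (2) observe, as in the proof of Theorem~\ref{thm:cr_ftp_upper_bound}, that $\PREDCOST(T_{\FTP}) \leqslant \PREDCOST(T_{\OPT}) \leqslant \TRUECOST(T_{\OPT}) + \ETA$; (3) combine these to obtain $\TRUECOST(T_{\GFTP}) \leqslant \TRUECOST(T_{\OPT}) + 2\ETA$; (4) divide through by $\OPT(\TRUEW) = \TRUECOST(T_{\OPT})$ to conclude $\frac{\GFTP(\PREDW,\TRUEW)}{\OPT(\TRUEW)} \leqslant 1 + 2\eps$.

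There is essentially no obstacle here: all the real work has been done in the preceding lemma, which handled the nontrivial part (producing the bijection $\PHI$ between the initial \FTP tree and the final \GFTP tree, and bounding the slack of each swap or non-swap by the corresponding $|\TRUEW(e) - \PREDW(e)|$). The only thing to be slightly careful about is to note that the bound $\PREDCOST(T_{\OPT}) \leqslant \TRUECOST(T_{\OPT}) + \ETA$ used here is symmetric to the one shown at the start of the proof of Theorem~\ref{thm:cr_ftp_upper_bound}, and relies on the fact that $T_{\OPT}$ has exactly $n-1$ edges, so the sum of $|\TRUEW(e) - \PREDW(e)|$ over its edges is at most $\ETA$ by Definition~\ref{def:mu}. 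Since this observation is already implicit in the previous theorem, the full proof of Theorem~\ref{thm:cr_cftp_upper_bound} will be just a short computation.
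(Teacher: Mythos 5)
Your proof is correct and follows exactly the same route as the paper's: invoke Lemma~\ref{lem:cftp_compared_to_ftp} to bound $\TRUECOST(T_{\GFTP})$ by $\PREDCOST(T_{\FTP}) + \ETA$, then reuse the chain $\PREDCOST(T_{\FTP}) \leqslant \PREDCOST(T_{\OPT}) \leqslant \TRUECOST(T_{\OPT}) + \ETA$ from Equation~\eqref{eq:upper_bound_computation} and divide by $\OPT(\TRUEW)$. The paper's proof is the same one-line reduction, so there is nothing further to compare.
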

\begin{proof}
Using Lemma~\ref{lem:cftp_compared_to_ftp} and following the proof of Theorem~\ref{thm:cr_ftp_upper_bound} from Equation~\eqref{eq:upper_bound_computation}, the result follows.
\end{proof}

\subsection{Lower Bounds}

We establish that the competitive ratios from Theorems~\ref{thm:cr_ftp_upper_bound} and \ref{thm:cr_cftp_upper_bound} are tight.

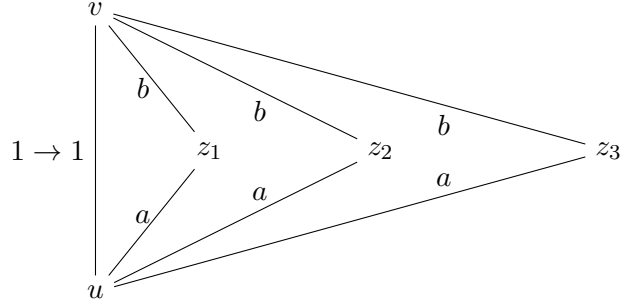
\begin{figure}[ht!]
\centering
\begin{tikzpicture}[scale = 0.75]
\node (u1) at (0,0) {$u$};
\node (u2) at (0,5) {$v$};

\draw (u1) -- node[left] {$1 \rightarrow 1$} (u2);

\node (v1) at (2,2.5) {$z_1$};
\node (v2) at (5,2.5) {$z_2$};
\node (v3) at (9,2.5) {$z_3$};

\draw (v1) -- node[above, pos = 0.6] {$a$} (u1);
\draw (v1) -- node[below, pos = 0.6] {$b$} (u2);
\draw (v2) -- node[above, pos = 0.4] {$a$} (u1);
\draw (v2) -- node[below, pos = 0.4] {$b$} (u2);
\draw (v3) -- node[above, pos = 0.3] {$a$} (u1);
\draw (v3) -- node[below, pos = 0.3] {$b$} (u2);
\end{tikzpicture}
\caption{The graph $G_{k,3}$, when $E_{T_{\FTP}} = \{(z_i,v) \mid i = 1,2,3\}$. Here, $a$ denotes the weights $k+1 \rightarrow 1$ and $b$ denotes the weights $k+1 \rightarrow 2k + 1$.}
\label{fig:ftp_lower_bound}
\end{figure}

\begin{theorem}\label{thm:cr_ftp_lower_bound}
For all $r<2$, there exists a WMST-instance $\WMSTI$, such that 
\begin{align*}
\frac{\FTP(\PREDW)}{\OPT(\TRUEW)} \geqslant 1 + r \eps.
\end{align*}
\end{theorem}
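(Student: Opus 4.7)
The plan is to instantiate the family of graphs $G_{k,n}$ suggested by Figure~\ref{fig:ftp_lower_bound}, generalized from $3$ middle vertices to $n-2$, and show that the ratio $\FTP/\OPT$ tends to $1+2\eps$ as $n \to \infty$ for every fixed $k$. Concretely, let $V(G_{k,n}) = \{u,v,z_1,\ldots,z_{n-2}\}$ with edges $(u,v)$, $(u,z_i)$ and $(v,z_i)$ for each $i$. Set $\PREDW(u,v) = \TRUEW(u,v) = 1$, and for each $i$ set $\PREDW(u,z_i) = \PREDW(v,z_i) = k+1$, with $\TRUEW(u,z_i) = 1$ and $\TRUEW(v,z_i) = 2k+1$. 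With respect to $\PREDW$, both edges incident to each $z_i$ have identical weight $k+1$, so a valid run of \FTP outputs $T_{\FTP} = \{(u,v)\}\cup\{(v,z_i) : 1\leq i\leq n-2\}$.

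Next I would compute the three quantities of interest. Summing true weights, $\TRUECOST(T_{\FTP}) = 1 + (n-2)(2k+1)$, while the true optimum takes $(u,v)$ together with all $(u,z_i)$ edges, giving $\OPT(\TRUEW) = n-1$. For the error, the only edge with nonzero $|\PREDW - \TRUEW|$ has absolute difference $k$ for every edge except $(u,v)$, so there are $2(n-2)$ candidates, all equal to $k$. Since the spanning tree has $n-1$ edges and $2(n-2)\geq n-1$ for $n\geq 3$, Definition~\ref{def:mu} gives $\ETA = k(n-1)$, hence $\eps = k$. Combining,
\begin{align*}
\frac{\FTP(\PREDW)}{\OPT(\TRUEW)} \;=\; \frac{1+(n-2)(2k+1)}{n-1} \;=\; 1 + \frac{2k(n-2)}{n-1} \;=\; 1 + \frac{2(n-2)}{n-1}\,\eps.
\end{align*}
For any $r<2$, pick $n$ large enough that $2(n-2)/(n-1) \geq r$; the resulting instance witnesses the claimed inequality.

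The one subtlety, which I expect to be the only real obstacle, is justifying that \FTP indeed produces the bad tree $T_{\FTP}$ rather than the good one: under $\PREDW$ the two edges at each $z_i$ tie, so a generic \FTP implementation might pick either. I see two clean ways to handle this. The first is to note that the construction really depends only on some fixed deterministic tie-breaking rule of \FTP, and to pick the labeling of $u$ and $v$ adversarially so that \FTP's rule selects $(v,z_i)$ every time; since the graph is symmetric in $u$ and $v$, this is always possible. The second, more uniform option is to perturb the construction by setting $\PREDW(u,z_i) = k+1+\delta$ for an arbitrarily small $\delta>0$, which forces \FTP to choose each $(v,z_i)$ strictly; this changes $\eps$ to $k + \Theta(\delta)$, and letting $\delta\to 0$ after $n\to\infty$ preserves the bound. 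Either route yields a clean proof; I would adopt the symmetry argument for simplicity, as it avoids the limiting step entirely.
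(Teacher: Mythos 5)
Your construction, arithmetic, and conclusion match the paper's proof exactly: the same bridge-plus-theta graph $G_{k,n}$ with $\PREDW(u,v)=\TRUEW(u,v)=1$, all other predicted weights $k+1$, true weights $1$ on the good side and $2k+1$ on the bad side, yielding $\eps=k$ and the ratio $1 + \frac{2(n-2)}{n-1}\eps$. So this is essentially the paper's proof.

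The one thing I would correct is the ``symmetry'' fix for the tie-breaking subtlety, which as stated does not hold up. Relabeling $u$ and $v$ is a single global choice, but a deterministic $\FTP$ tie-breaker on the symmetric prediction graph could easily select a mixture, e.g.\ $(u,z_1)$ and $(v,z_2)$; then neither labeling routes all of $\FTP$'s choices to the $v$-side. The paper instead avoids the issue by \emph{not} committing to the true weights in advance: it lets $E_{T_{\FTP}}$ be whichever $n-2$ of the $z$-edges $\FTP$'s deterministic MST on $\PREDW$ actually picked, and only then assigns $\TRUEW=2k+1$ to exactly those edges and $\TRUEW=1$ to their partners. This is the clean, edge-by-edge version of the adversarial idea you have in mind. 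Your $\delta$-perturbation alternative is also sound (the error changes to $k + O(\delta)$ and disappears in the limit), at the cost of an extra limiting argument; either that or the paper's ``reveal-$T_{\FTP}$-first'' device should replace the relabeling step.
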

\begin{proof}
For $k>1$ and $\ell \in \ZZ^+$, define the WMST-instance $(G_{k,\ell},\PREDW_{k,\ell},\TRUEW_{k,\ell})$, with $G_{k,\ell}=(V,E)$, as (see Figure~\ref{fig:ftp_lower_bound}):
\begin{itemize}
\item $V=\{ u,v\} \cup \{ z_i \mid 1\leqslant i\leqslant \ell\}$,
\item $E=\{ (u,v)\} \cup E_{T_{\FTP}} \cup E_{T_{\OPT}}$, where $E_{T_{\FTP}}= T_{\FTP} \setminus \{(u,v)\}$ and $E_{T_{\OPT}}=(E \setminus \{(u,v)\}) \setminus E_{T_{\FTP}}$,
\item $\PREDW_{k,\ell}((u,v)) = 1$ and $\TRUEW_{k,\ell}((u,v)) = 1$, and
\item for all $e \in E\setminus \{(u,v)\}\colon \PREDW_{k,\ell}(e) = k + 1$.
\end{itemize}
Then, the adversary sets $\TRUEW_{k,\ell}(e) = 2k + 1$ for all $e \in E_{T_{\FTP}}$, and $\TRUEW_{k,\ell}(e) = 1$, for all $e \in E_{T_{\OPT}}$. There is a prediction error of $k$ on each edge, except for $(u,v)$. Hence,
\begin{itemize}
\item $\OPT(\TRUEW_{k,\ell}) = \ell+1$,
\item $\FTP(\PREDW_{k,\ell}) =  \ell(2k+1)+1$,
\item $\ETA = (\ell+1)k$ and $\eps=k$.
\end{itemize}
From this, it follows that
\begin{align*}
  \FTP(\PREDW_{k,\ell}) & = \OPT(\TRUEW_{k,\ell}) + 2k\ell \\
& = \OPT(\TRUEW_{k,\ell}) + 2k\ell + 2k - 2k \\
& = \OPT(\TRUEW_{k,\ell}) + 2k(\ell + 1) - 2k \\
& = \OPT(\TRUEW_{k,\ell}) + 2\eps\OPT(\TRUEW_{k,\ell}) - 2k,
\end{align*}
so
\begin{align*}
\frac{\FTP(\PREDW_{k,\ell})}{\OPT(\TRUEW_{k,\ell})} = 1 + \left(2-\frac{2}{(\ell+1)}\right)\eps.
\end{align*}
For all $r<2$, there exist $\ell \in \ZZ^+$ such that $2-\frac{2}{(\ell+1)}\geqslant r$.
\end{proof}

\begin{corollary}\label{thm:cr_cftp_lower_bound}
For all $r<2$, there exists a WMST-instance $\WMSTI$, such that
\begin{align*}
\frac{\GFTP(\PREDW,\TRUEW)}{\OPT(\TRUEW)} \geqslant 1 + r \eps.
\end{align*}
\end{corollary}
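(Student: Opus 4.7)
The plan is a one-step reduction to Theorem~\ref{thm:cr_ftp_lower_bound}. I will reuse the exact same WMST-instance family $(G_{k,\ell},\PREDW_{k,\ell},\TRUEW_{k,\ell})$ and exhibit an adversarial arrival order under which $\GFTP$ performs no swaps. Once this is in place, $T_{\GFTP} = T_{\FTP}$ and hence $\GFTP(\PREDW_{k,\ell},\TRUEW_{k,\ell}) = \FTP(\PREDW_{k,\ell})$, so the ratio computation at the end of the proof of Theorem~\ref{thm:cr_ftp_lower_bound} carries over verbatim and produces the desired bound $1 + (2-\tfrac{2}{\ell+1})\eps \geqslant 1 + r\eps$ for all sufficiently large $\ell$.

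First I would specify the arrival order: reveal the $\ell$ edges of $E_{T_{\FTP}}$ (the $(z_i,v)$-edges in Figure~\ref{fig:ftp_lower_bound}) in arbitrary order, then reveal $(u,v)$, and only then reveal the $\ell$ edges of $E_{T_{\OPT}}$ (the $(z_i,u)$-edges). Every edge in the first two groups lies in the current tree $T = T_{\FTP}$, so $\GFTP$ accepts it in Line~\ref{alg-in-accept} and, crucially, removes it from the unseen set $U$ before any edge of $E_{T_{\OPT}}$ is processed.

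The one step that requires real verification is that no arriving $(z_i,u) \in E_{T_{\OPT}}$ triggers a swap. The unique cycle this edge introduces in $T$ is $\{(z_i,u),(u,v),(v,z_i)\}$, and by our ordering, both $(u,v)$ and $(v,z_i)$ have already been removed from $U$; hence $C' = U \cap C = \emptyset$, the inner conditional in Algorithm~\ref{alg:gftp} is false, and $T$ is left unchanged. I do not foresee any substantive obstacle: the adversary, being in charge of both the weights and the arrival order, can always preempt $\GFTP$'s swapping rule by revealing every potential replacement edge before the edge that would otherwise replace it, thereby reducing $\GFTP$ to $\FTP$ on this family.
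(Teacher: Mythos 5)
Your proposal is correct and is essentially the paper's own argument, which reuses the instance from Theorem~\ref{thm:cr_ftp_lower_bound} and lets the adversary reveal all edges of $\OPT[\PREDW]$ before any others, so that $C' = U \cap C = \emptyset$ on every later arrival and $\GFTP$ collapses to $\FTP$. You spell out the cycle check in Algorithm~\ref{alg:gftp} explicitly, which the paper leaves implicit, but the reduction and the resulting bound $1 + (2-\tfrac{2}{\ell+1})\eps$ are identical.
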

\begin{proof}
  With the same set-up as in Theorem~\ref{thm:cr_ftp_lower_bound}, the adversary now additionally forces $\GFTP$ to pick the same tree as $\FTP$ by revealing the true weights of all the edges in its initial tree, $\OPT[\PREDW]$, before all other edges.
\end{proof}

\begin{corollary}
$\CR_{\FTP}(\eps) = 1 + 2\eps$ and $\CR_{\GFTP}(\eps) = 1 + 2 \eps$.
\end{corollary}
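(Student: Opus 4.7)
The plan is to observe that the corollary is an immediate consequence of the upper and lower bounds already established. The upper bounds $\CR_{\FTP}(\eps) \leqslant 1 + 2\eps$ and $\CR_{\GFTP}(\eps) \leqslant 1 + 2\eps$ come directly from Theorems~\ref{thm:cr_ftp_upper_bound} and \ref{thm:cr_cftp_upper_bound}. So the only thing left to argue is that these upper bounds are tight, i.e., that no smaller coefficient than $2$ in front of $\eps$ will do.

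For the matching lower bounds, I would appeal to Theorem~\ref{thm:cr_ftp_lower_bound} and Corollary~\ref{thm:cr_cftp_lower_bound}. Both give, for every $r<2$, a WMST-instance on which the algorithm's cost-to-\OPT ratio is at least $1 + r\eps$. In particular, this means no smoothness function of the form $1 + r\eps$ with $r < 2$ can be an upper bound for $\CR_{\FTP}$ or $\CR_{\GFTP}$. Combined with the additive slack $b$ allowed in the definition of $\CR$, I would briefly note that this additive slack cannot close the gap between $1+r\eps$ and $1+2\eps$ either, since the family of instances $(G_{k,\ell}, \PREDW_{k,\ell}, \TRUEW_{k,\ell})$ from Theorem~\ref{thm:cr_ftp_lower_bound} can be scaled by making $\ell$ large, so $\OPT(\TRUEW_{k,\ell})=\ell+1$ grows without bound and absorbs any constant $b$.

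Concretely, I would write: by Theorems~\ref{thm:cr_ftp_upper_bound} and \ref{thm:cr_cftp_upper_bound}, $\CR_{\FTP}(\eps) \leqslant 1+2\eps$ and $\CR_{\GFTP}(\eps) \leqslant 1+2\eps$. Conversely, by Theorem~\ref{thm:cr_ftp_lower_bound} and Corollary~\ref{thm:cr_cftp_lower_bound}, for any $r<2$ and any constant $b$, the family of instances with sufficiently large $\ell$ witnesses $\FTP(\PREDW)$ (resp.\ $\GFTP(\PREDW,\TRUEW)$) $> (1+r\eps)\OPT(\TRUEW) - b$, so $\CR_{\FTP}(\eps), \CR_{\GFTP}(\eps) \geqslant 1 + r\eps$ for every $r<2$, giving $\CR_{\FTP}(\eps), \CR_{\GFTP}(\eps) \geqslant 1 + 2\eps$.

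There is no real obstacle here; the corollary is a bookkeeping step that aligns the upper bound proved in the preceding subsection with the matching lower bound just established. The only point that warrants a single sentence of care is the handling of the additive constant $b$ in the definition of $\CR$, which is dealt with by scaling $\ell$.
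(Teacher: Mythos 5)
Your proposal is correct and matches the paper's proof, which simply cites Theorems~\ref{thm:cr_ftp_upper_bound}, \ref{thm:cr_cftp_upper_bound}, \ref{thm:cr_ftp_lower_bound}, and Corollary~\ref{thm:cr_cftp_lower_bound}. Your additional remark about absorbing the additive constant $b$ by taking $\ell$ large is a sound (if implicit in the paper) piece of bookkeeping, not a different approach.
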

\begin{proof}
This is a direct consequence of Theorems~\ref{thm:cr_ftp_upper_bound}, \ref{thm:cr_cftp_upper_bound}, and \ref{thm:cr_ftp_lower_bound}, and Corollary~\ref{thm:cr_cftp_lower_bound}.
\end{proof}

We establish a general lower bound for deterministic online algorithms with predictions for the WMST problem. The existence of this lower bound shows that, under competitive analysis, any online algorithm with predictions, $\ALG$, which, for any WMST-instance, $\WMSTI$, guarantees that $\ALG(\PREDW,\TRUEW) \leqslant \OPT(\TRUEW) + 2\ETA$ is asymptotically optimal in $\ETA$. Thus, both $\FTP$ and $\GFTP$ are optimal.

\begin{theorem}\label{thm:general_lower_bound}
For the WMST problem with weight predictions, for any algorithm, \ALG, and any $r < 2$, there exists a WMST-instance $\WMSTI$ such that
\begin{align*}
\ALG(\PREDW,\TRUEW) \geqslant \OPT(\TRUEW) + r\ETA. 
\end{align*}
\end{theorem}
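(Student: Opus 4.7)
Given any deterministic online algorithm $\ALG$ and any $r < 2$, the plan is to construct a bad WMST-instance, reusing the graph $G_{k, \ell}$ from Theorem~\ref{thm:cr_ftp_lower_bound}. The adversary is adaptive: since $\ALG$ is deterministic, it can simulate $\ALG$ to choose true weights that punish it. The main instance uses symmetric predictions $\hat w(u, v) = 1$ and $\hat w(u, z_i) = \hat w(v, z_i) = k + 1$ for each triangle. The adversary reveals $(u, v) = 1$ first (which $\ALG$ must accept), and then for each triangle reveals $(u, z_i)$ with weight $a_i$ and afterwards $(v, z_i)$ with weight $b_i$, both chosen adaptively.

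The core of the argument is to force each triangle into a \emph{balanced error} configuration in which both edges have identical prediction error $\delta > 0$ while $\ALG$ incurs per-triangle excess $2\delta$. If $\ALG$ accepts some weight $a > k + 1$ on $(u, z_i)$, set $a_i = a$ and $b_i = 2(k+1) - a_i$; then $b_i < a_i$, so $\OPT$ uses $(v, z_i)$ while $\ALG$ is committed to $(u, z_i)$, giving excess $a_i - b_i = 2(a - k - 1)$ with each edge having error $a - k - 1$. Symmetrically, if $\ALG$ rejects some $a < k + 1$, set $a_i = a$ and $b_i = 2(k+1) - a_i > a_i$, forcing $\ALG$ onto the heavier $(v, z_i)$ while $\OPT$ would have used $(u, z_i)$. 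Over $\ell$ identical triangles, total excess is $2\ell\delta$ and $\ETA = (\ell+1)\delta$ (the top $\ell+1$ of $2\ell$ equal per-edge errors), so $(\ALG - \OPT)/\ETA = 2\ell/(\ell+1) \geq r$ for $\ell$ sufficiently large.

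The main obstacle is an $\ALG$ that behaves as a strict threshold exactly at $k+1$ on this instance, accepting all $a \leq k + 1$ and rejecting all $a > k + 1$; then neither sub-case yields a positive $\delta$. For such $\ALG$, I would switch to an alternative instance on $G_{k, \ell}$ with asymmetric predictions $\hat w(u, z_i) = 1$ and $\hat w(v, z_i) = 2k + 1$: the adversary reveals $(u, z_i) = 1 + \delta$ for tiny $\delta > 0$, which is above the prediction $1$ and so rejected by the strict-threshold $\ALG$, forcing it to accept $(v, z_i) = 2k + 1$ to keep $z_i$ connected. This produces per-triangle excess $2k - \delta$ with the only nonzero prediction error being $\delta$ on $(u, z_i)$, yielding aggregated ratio $(2k - \delta)/\delta$, which exceeds any $r < 2$ for $\delta$ chosen sufficiently small.
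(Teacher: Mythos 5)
Your proposal captures the key quantitative insight correctly — $\ETA$ only counts the top $n-1$ per-edge errors, while the adversary can spread equal errors over roughly $2\ell$ edges, so a per-triangle excess of $2\delta$ against a per-triangle $\ETA$-contribution of roughly $\delta$ aggregates to a ratio approaching $2$. This is the same effect the paper exploits. However, there is a genuine gap in the case analysis that is not repaired by the backup instance.

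The hole is your ``strict threshold'' case. Concretely, on the symmetric triangle instance consider an algorithm $\ALG'$ whose acceptance rule for $(u,z_i)$ at revealed weight $a$ is \emph{accept iff $a \leqslant \tfrac{\PREDW(u,z_i)+\PREDW(v,z_i)}{2}$}. On the symmetric instance this is exactly ``accept iff $a\leqslant k+1$'', so there is no $a>k+1$ it accepts and no $a<k+1$ it rejects: neither of your two exploitable sub-cases applies, and a short computation shows the best per-triangle ratio of excess to error the adversary can force is $1$, not $2$. Your backup asymmetric instance does not rescue this: with $\PREDW(u,z_i)=1$ and $\PREDW(v,z_i)=2k+1$, the same $\ALG'$ still accepts iff $a\leqslant k+1$, hence accepts $1+\delta$, and the adversary gains nothing. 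The step ``the strict-threshold $\ALG$ rejects $1+\delta$ because it is above the prediction'' silently assumes the algorithm's threshold is tied to $\PREDW(u,z_i)$; but case C only constrains $\ALG$'s behaviour on the symmetric instance with predictions $k+1$, and a deterministic algorithm may behave completely differently once the predictions change. Since the quantifier is ``for every $\ALG$'', a single such $\ALG'$ breaks the proof. (There is also a secondary issue you gloss over: the $\delta_i$'s the adversary can force may differ wildly across triangles, in which case the top-$(\ell+1)$ sum defining $\ETA$ no longer matches $2\ell\delta$; and if $\ALG$ only accepts $a>2(k+1)$ your formula $b_i=2(k+1)-a_i$ produces a negative weight. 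But the case-C gap is the essential problem.)

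The paper avoids this entirely by replacing each two-choice triangle with a $2k$-edge star $z_j \to \{v_1,\dots,v_{2k}\}$, with predicted weights $k, k+1, \dots, 3k-1$. The adversary forces a \emph{chain} of $2k$ accept/reject decisions; wherever $\ALG$ commits, the adversary reveals a weight $2k-1$ cheaper on the next edge and makes all remaining edges expensive, so the per-star excess is exactly $2k-1$ for \emph{every} decision rule, with no threshold-at-the-prediction escape. The $2k\ell$ star edges all carry error $k$, but only $n-1 = 2k+\ell-1$ contribute to $\ETA$, giving the same top-$(n-1)$ overhang you identified. So the quantitative mechanism you found is right, but you need the chain structure (or some other device) to make the per-unit excess adversary-robust rather than conditional on $\ALG$ misjudging a single threshold.
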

\begin{proof}
Fix $k \in \ZZ^+$, such that $k > 1$. For any $\ell \in \ZZ^+$, define $(G_{k,\ell},\PREDW_{k,\ell},\TRUEW_{k,\ell})$, with $G_{k,\ell} = (V,E)$ as follows; see Figure~\ref{fig:lower_bound}:

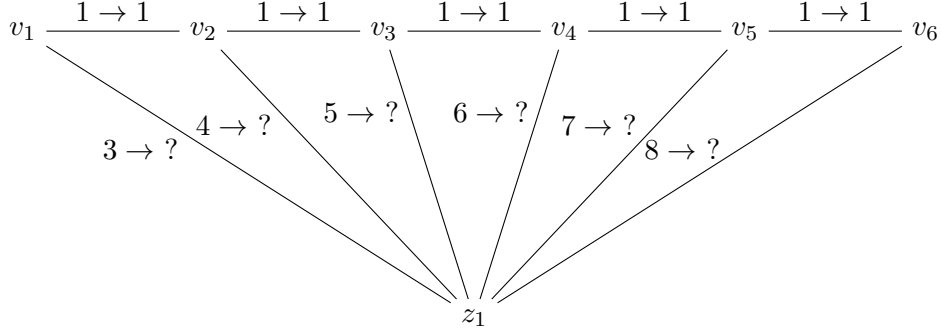
\begin{figure}[ht!]
\centering
\begin{tikzpicture}[scale=0.95]
\node (1) at (0,4) {$v_1$};
\node (2) at (2.5,4) {$v_2$};
\node (3) at (5,4) {$v_3$};
\node (4) at (7.5,4) {$v_4$};
\node (5) at (10,4) {$v_5$};
\node (6) at (12.5,4) {$v_6$};

\draw (1) -- node[above] {$1 \rightarrow 1$} (2);
\draw (2) -- node[above] {$1 \rightarrow 1$} (3);
\draw (3) -- node[above] {$1 \rightarrow 1$} (4);
\draw (4) -- node[above] {$1 \rightarrow 1$} (5);
\draw (5) -- node[above] {$1 \rightarrow 1$} (6);

\node (l1) at (6.25,0) {$z_1$};

\draw (l1) -- node[left, pos = 0.6] {$3\rightarrow\ ?\ \ $} (1);
\draw (l1) -- node[left, pos = 0.7] {$4\rightarrow\ ?\ $} (2);
\draw (l1) -- node[left, pos = 0.75] {$5\rightarrow\ ?$} (3);
\draw (l1) -- node[left, pos = 0.75] {$6\rightarrow\ ?$} (4);
\draw (l1) -- node[left, pos = 0.7] {$7\rightarrow\ ?\ $} (5);
\draw (l1) -- node[left, pos = 0.6] {$8\rightarrow\ ?\ $} (6);

\end{tikzpicture}
\caption{Lower bound graph $G_{3,1}$.}
\label{fig:lower_bound}
\end{figure}

\begin{itemize}[label = {$\bullet$}]
\item $V = \{v_1,v_2,\ldots,v_{2k}\} \cup \{z_j \mid 1 \leqslant j \leqslant \ell\}$,
\item $E = I \cup \bigcup_{j=1}^{\ell} E_j$, where 
\begin{itemize}[label = {-}]
\item $I = \bigcup_{i=1}^{2k-1} \{(v_i,v_{i+1})\}$,
\item $E_j = \{(z_j,v_i) \mid 1 \leqslant i \leqslant 2k\}$, for all $j=1,2,\ldots,\ell$,
\end{itemize}
\item $\forall e \in I$, $\PREDW_{k,\ell}(e) = \TRUEW_{k,\ell}(e) = 1$,
\item $\forall (z_j,v_i) \in E_j$, $\PREDW_{k,\ell}((z_j,v_i)) = k + i - 1$; $\ADV$ determines $\TRUEW_{k,\ell}((z_j,v_i))$ as shown in Algorithm~\ref{alg:adv_general_lower_bound}:
\end{itemize} 

\begin{algorithm}[ht!]
\caption{$\ADV$}
\begin{algorithmic}[1]
\FOR {$i=1,2,\ldots,2k-1$}
	\STATE Reveal $\TRUEW_{k,\ell}(v_i,v_{i+1}) = 1$
\ENDFOR
\FOR {$j=1,2,\ldots,\ell$}
	\STATE Set $\TRUEW_{k,\ell}(z_j,v_1) = 2k$
	\FOR {$i=1,2,\ldots,2k-1$}
		\STATE Reveal $\TRUEW_{k,\ell}(z_j,v_i)$
		\IF {$\ALG$ accepts $(z_j,v_i)$}
			\STATE Set $\TRUEW_{k,\ell}(z_j,v_{i+1}) = i$
		\ELSE 
			\STATE Set $\TRUEW_{k,\ell}(z_j,v_{i+1}) = 2k+i$
		\ENDIF
	\ENDFOR
	\STATE Reveal $\TRUEW_{k,\ell}(z_j,v_{2k})$
\ENDFOR
\end{algorithmic}
\label{alg:adv_general_lower_bound}
\end{algorithm}

Since all edges in any $E_j$ have weight greater than $1$, $\ALG$ performs best if it accepts the first $2k-1$ edges and then exactly one edge from $E_j$ for each $j$. Since we are proving a lower bound, we assume this.

Observe that for each $e \in \bigcup_{j=1}^{\ell}E_j$, we have that
\begin{align*}
\abs{\PREDW_{k,\ell}(e) - \TRUEW_{k,\ell}(e) } = k. 
\end{align*}
Hence, by definition of $\ETA$, and since $\abs{V(G_{k,\ell})} = 2k+\ell$, it follows that
\begin{align*}
\ETA_{k,\ell} = \ETA(\PREDW_{k,\ell},\TRUEW_{k,\ell}) = (2k + \ell - 1)k = 2k^2 + k\ell - k.
\end{align*}
Now, we analyze the behavior of $\ALG$ and $\OPT$ to asses their cost difference. Since $\ALG$ has to pick exactly one edge from each $E_j$, there are two cases: \\
\textbf{Case: $\ALG$ accepts $\bm{{(z_j,v_i)}}$, for some $\bm{{i \neq 2k}}$.} Then, $\ADV$ sets the true weight of $(z_jv_{i+1})$ to be $\TRUEW_{k,l}((z_j,v_{i+1})) = i$, and so, $\OPT$ accepts $(z_j,v_i)$. The cost difference between $\ALG$'s and $\OPT$'s choice is
\begin{align*}
2k + i - 1 - i = 2k - 1.
\end{align*}
\textbf{Case: $\ALG$ accepts $\bm{{(z_j,v_{2k})}}$.} $\TRUEW_{k,\ell}((z_j,v_{2k})) = 2k + 2k - 1 = 4k-1$. $\OPT$ picks $(z_j,v_1)$, and so, the cost difference between $\ALG$'s and $\OPT$'s choice is
\begin{align*}
4k-1-2k = 2k-1.
\end{align*}
Thus,
\begin{align*}
\ALG(\PREDW_{k,\ell},\TRUEW_{k,\ell}) &\geqslant \OPT(\TRUEW_{k,\ell}) + \ell(2k-1) \\
&= \OPT(\TRUEW_{k,\ell}) + 4k^2 + 2k\ell - 2k - (\ell + 4k^2 - 2k) \\
&= \OPT(\TRUEW_{k,\ell}) + 2\ETA_{k,\ell} - (\ell + 4k^2 - 2k) \\
&= \OPT(\TRUEW_{k,\ell}) + \left(2 - \frac{\ell + 4k^2 - 2k}{2k^2 + k\ell - k}\right)\ETA_{k,\ell}.
\end{align*}
For any $r < 2$, there exists $k,\ell \in \ZZ^+$ such that $2 - \frac{\ell + 4k^2 - 2k}{2k^2 + k\ell - k} \geqslant r$. 
\end{proof}

\section{Separation by Random Order Analysis}

We show that $\GFTP$ has a better random order ratio than $\FTP$, separating the two algorithms.

\begin{theorem}\label{thm:ror_ftp_full_analysis}
$\ROR_{\FTP}(\eps) = 1 + 2\eps$.
\end{theorem}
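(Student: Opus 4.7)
The plan is to exploit a simple but crucial observation: the output of $\FTP$ is exactly the tree $T_{\FTP}=\OPT[\PREDW]$, which is determined entirely by $G$ and the predicted weights $\PREDW$. In particular, $\FTP$'s accept/reject decision on each edge $e$ depends only on whether $e\in T_{\FTP}$, not on the order in which the true weights are revealed. Therefore, for any permutation $\sigma$ of the input sequence, $\FTP(\sigma(I)) = \TRUECOST(T_{\FTP})$, which is a constant (in $\sigma$). Consequently, $\mathbb{E}_\sigma[\FTP(\sigma(I))] = \FTP(I)$ for every WMST-instance.

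Given this observation, the upper bound $\ROR_{\FTP}(\eps)\leqslant 1+2\eps$ follows immediately from Theorem~\ref{thm:cr_ftp_upper_bound}: for every instance, $\mathbb{E}_\sigma[\FTP(\sigma(I))] = \FTP(I) \leqslant (1+2\eps)\OPT(\TRUEW)$, so the infimum defining $\ROR_{\FTP}(\eps)$ is at most $1+2\eps$.

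For the matching lower bound, I would reuse the family of WMST-instances $(G_{k,\ell},\PREDW_{k,\ell},\TRUEW_{k,\ell})$ constructed in the proof of Theorem~\ref{thm:cr_ftp_lower_bound}. Since $\FTP$'s cost on each such instance is invariant under permutations of the arrival order, the expectation over $\sigma$ is simply $\FTP(\PREDW_{k,\ell})=\ell(2k+1)+1$, and the computation from Theorem~\ref{thm:cr_ftp_lower_bound} yields
\[
\frac{\mathbb{E}_\sigma[\FTP(\sigma(I))]}{\OPT(\TRUEW_{k,\ell})} \;=\; 1 + \left(2 - \frac{2}{\ell+1}\right)\eps.
\]
Choosing $\ell$ large enough makes the coefficient of $\eps$ exceed any $r<2$, forcing $\ROR_{\FTP}(\eps)\geqslant 1+2\eps$.

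There is essentially no obstacle here: the only subtle point is arguing carefully that the random order model does not hurt $\FTP$ because $\FTP$ is oblivious to the arrival order. This is what makes the random order analysis for $\FTP$ trivial, and is precisely the reason the paper can hope to separate $\FTP$ from $\GFTP$ using random order analysis in the subsequent section, since $\GFTP$'s decisions genuinely depend on the arrival order.
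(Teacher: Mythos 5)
Your proof is correct and takes exactly the same route as the paper: you observe that $\FTP$'s output is independent of the arrival order, so its expected cost under a random permutation equals its worst-case cost, and then you inherit both the upper bound from Theorem~\ref{thm:cr_ftp_upper_bound} and the lower bound from Theorem~\ref{thm:cr_ftp_lower_bound}. The paper's own proof is a two-sentence version of this same argument.
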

\begin{proof}
Since $\FTP$ does not make online changes to $T_{\FTP}$, the competitive analysis of $\FTP$ translates directly to a random order analysis of $\FTP$. Hence, the result follows from Theorems~\ref{thm:cr_ftp_upper_bound} and~\ref{thm:cr_ftp_lower_bound}.
\end{proof}

We start with the following lower bound on \GFTP.

\begin{theorem}\label{thm:ror_cftp_lower_bound}
$\ROR_{\GFTP}(\eps) \geqslant 1 + \eps$.
\end{theorem}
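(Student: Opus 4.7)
My plan is to construct a family of WMST-instances based on a simple cycle graph in which $T_{\FTP}$ is forced to include a single ``chord'' edge with low predicted weight and large true weight, and to show that under a uniformly random arrival order $\GFTP$ manages to swap this chord out only when it happens to be the very last edge to arrive.

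Concretely, I take the $(\ell+1)$-cycle on vertices $v_0,\ldots,v_\ell$ with path edges $e_i=(v_{i-1},v_i)$ for $i=1,\ldots,\ell$ and chord $f=(v_0,v_\ell)$. I set $\PREDW(f)=1<\PREDW(e_i)=1+i\delta$ (small $\delta>0$ ensures uniqueness of the predicted MST and breaks all ties), $\TRUEW(f)=M$ for a large parameter $M$, and $\TRUEW(e_i)=1$. Then the unique predicted MST is $T_{\FTP}=\{f,e_1,\ldots,e_{\ell-1}\}$ of true cost $M+\ell-1$, while $\OPT(\TRUEW)=\ell$ is achieved by the all-path-edges tree. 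As $\delta\to 0$, $\ETA\to M-1$ and $\eps\to (M-1)/\ell$.

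The main technical step is a structural invariant on $\GFTP$'s run: until (and unless) $f$ is removed from $T$, the tree always has the form $\{f\}\cup\{e_i:i\neq j^*\}$ for some current ``missing'' index $j^*$. This forces every non-tree arrival to close the entire $(\ell+1)$-cycle, so $\EMAX$ is the argmax of $\PREDW$ over all currently unseen cycle edges. Since $\PREDW(f)$ is strictly less than every $\PREDW(e_i)$, $\EMAX$ equals the maximum-index unseen path edge whenever one exists, and equals $f$ only when $f$ is the sole remaining unseen edge in the cycle. At step $t$ we have $|C\cap U|=\ell+1-t$, so the configuration $C\cap U=\{f\}$ is achievable only at $t=\ell$ and only if the random permutation $\sigma$ has $\sigma(f)=\ell+1$. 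Proving the invariant carefully---especially handling the subtle case of a path edge that is swapped out of $T$ before its own arrival (each such edge is always swapped back in when it arrives, since $\TRUEW(e_i)=1$ trivially satisfies the swap condition against any predicted weight on the cycle)---is the main obstacle.

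With the invariant in hand, the expected cost is a direct computation:
\[
\mathbb{E}[\GFTP(\PREDW,\TRUEW)]=\frac{\ell}{\ell+1}(M+\ell-1)+\frac{1}{\ell+1}\,\ell=\frac{\ell(M+\ell)}{\ell+1}.
\]
Dividing by $\OPT=\ell$ and substituting $M=\eps\ell+1$ gives $\mathbb{E}[\GFTP]/\OPT=(1+\eps)-\eps/(\ell+1)$, which tends to $1+\eps$ as $\ell\to\infty$ (and $\delta\to 0$). By the definition of the random-order ratio, which permits an additive slack $b$ and takes an infimum over admissible $c$, this proves $\ROR_{\GFTP}(\eps)\geq 1+\eps$.
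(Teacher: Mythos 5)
Your proof is correct, and it uses a genuinely different construction from the paper's. The paper builds a ``book'' graph: a spine $(u,v)$ with predicted and true weight $\delta<1$, and $\ell$ triangular ``pages,'' each contributing a pair $(z_i,u),(z_i,v)$ of predicted weight $k+1$, with the adversary making one true weight $2k+1$ and the other $1$. \GFTP starts with one edge per page and repairs page $i$ precisely when its cheap edge arrives before its expensive one, an event of probability $1/2$, \emph{independently per page}; the lower bound then falls out of linearity of expectation over $\ell$ independent contributions. You instead place everything on a single $(\ell+1)$-cycle, with one bad edge $f$ whose true weight is inflated, and show that \GFTP evicts $f$ only on the permutation class where $f$ is revealed last (probability $1/(\ell+1)$), so the expected extra cost is $\tfrac{\ell}{\ell+1}(M-1)$. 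Both routes give $1+\eps-\Theta(\eps/\ell)$ and converge to $1+\eps$ as $\ell\to\infty$ with $\OPT\to\infty$ absorbing the additive slack. Your construction is arguably more elementary (the smallest nontrivial graph family), but the key invariant deserves a sharper statement than the parenthetical you give: what the argument really needs is that while $f\in T$, the unique non-tree edge is always an \emph{unseen} path edge (so the missing index $j^*$ chases the max-index unseen path edge), which forces the last path edge revealed to be $e_{j^*}$ whenever $f$ is last, guaranteeing the eviction. The phrase ``each such edge is always swapped back in when it arrives'' is not literally true --- if the missing path edge arrives last then $C'=\emptyset$ and no swap fires --- but that case only occurs when $f$ is not last and does not affect the cost accounting. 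One further detail worth making explicit: to keep $\ETA\to M-1$ you must send $\delta\to 0$ faster than $1/\ell^2$, since the tie-breaking perturbations contribute $\Theta(\delta\ell^2)$ to the top-$(n-1)$ error sum.
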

\begin{proof}
Fix $k > 1$. For each $\ell \in \ZZ^+$, construct an WMST-instance as in Theorem~\ref{thm:cr_ftp_lower_bound}, $(G_{k,\ell},\PREDW_{k,\ell},\TRUEW_{k,\ell})$,  and modify it into $(G_{k,\delta,\ell},\PREDW_{k,\delta,\ell},\TRUEW_{k,\delta,\ell})$ by setting $\TRUEW_{k,\ell}((u,v)) = \PREDW_{k,\ell}((u,v)) = \delta$, where $0 < \delta < 1$. In this way, $\GFTP$ always accepts $(u,v)$, regardless of the order in which the true weights of the edges arrive. By construction of $G_{k,\delta,\ell}$, $\GFTP$ will, for each $i=1,2,\ldots,\ell$, have to accept either $(z_i,u)$ or $(z_i,v)$. As in Theorem~\ref{thm:cr_ftp_lower_bound}, $\ADV$ sets the true weight of all edges in $E_{T_{\FTP}}$ to be $2k+1$ and all edges in $E_{T_{\OPT}}$ to be $1$. Assume, without loss of generality, that $E_{T_{\FTP}}$ contains all edges $(z_i,v)$ such that $\TRUEW_{k,\delta,\ell}((z_i,v)) = 2k+1$ and $\TRUEW_{k,\delta,\ell}((z_i,u)) = 1$, for all $i=1,2,\ldots,\ell$. Then, $\GFTP$ only replaces $(z_i,v)$ with $(z_i,u)$ if $(z_i,u)$ is revealed before $(z_i,v)$. Since the edges arrive uniformly at random, for each $i=1,2,\ldots,\ell$, $(z_i,u)$ is revealed before $(z_i,v)$ with probability $\frac{1}{2}$. Recalling that $\eps=k$ and $\OPT(\TRUEW_{k,\delta,\ell})=\ell + \delta$,
\begin{align*}
\mathbb{E}_\sigma[\GFTP(\PREDW_{k,\delta,\ell},\TRUEW_{k,\delta,\ell})] &\geqslant \OPT(\TRUEW_{k,\delta,\ell}) + \tfrac{1}{2}\cdot\ell\cdot 2k \\
&= \OPT(\TRUEW_{k,\delta,\ell}) + \eps \OPT(\TRUEW_{k,\delta,\ell}) - \delta k,
\end{align*}
and so
\begin{align*}
\frac{\mathbb{E}_\sigma[\GFTP(\PREDW_{k,\delta,\ell},\TRUEW_{k,\delta,\ell})]}{\OPT(\TRUEW_{k,\delta,\ell})} = 1 + \left(1 - \frac{\delta}{\ell+\delta}\right) \eps.
\end{align*}	
For all $r<1$ and all $0<\delta<1$, there exist $\ell \in \ZZ^+$ so $1 - \frac{\delta}{\ell+\delta} \geqslant r$. 
\end{proof}

We now turn to proving an upper bound of $1 + (1+\ln(2))\eps \approx 1
+ 1.69 \, \eps$  on the
random order ratio of \GFTP (Theorem~\ref{thm:ror_alg4}).
To this end, we apply the following lemmas.

\begin{lemma}\label{lem:bijectivity_lemma}
Let $G$ be a graph, and let $T_1$ and $T_2$ be two spanning trees of $G$. Then, for any edge $e_1 \in T_1 \setminus T_2$, there exists an edge $e_2 \in T_2 \setminus T_1$ such that $e_2$ introduces a cycle into $T_1$ that contains $e_1$, and $e_1$ introduces a cycle into $T_2$ that contains $e_2$.
\end{lemma}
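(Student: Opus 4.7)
The plan is to exploit the fundamental cut in $T_1$ induced by $e_1$, which will simultaneously control both cycle conditions. First, I would delete $e_1$ from $T_1$; since $T_1$ is a spanning tree, $T_1 \setminus \{e_1\}$ is a spanning forest with exactly two components. Let $A$ and $B$ denote the vertex sets of these two components, so $(A,B)$ is a cut of $G$, and crucially $e_1$ is the unique edge of $T_1$ crossing it (every other $T_1$-edge stays within $A$ or within $B$).

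Next, I would move to the $T_2$ side. Adding $e_1$ to $T_2$ creates a unique cycle $C_2$, and $C_2 \setminus \{e_1\}$ is the unique path $P$ in $T_2$ between the two endpoints of $e_1$. Since those endpoints lie on opposite sides of the cut $(A,B)$, the path $P$ must contain at least one edge crossing $(A,B)$. I would pick any such edge and name it $e_2$.

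Finally, I would verify the three required properties for this $e_2$. For membership in $T_2 \setminus T_1$: any $T_1$-edge crossing $(A,B)$ must be $e_1$ itself, and $e_1 \notin T_2$ by assumption, so an edge of $T_2$ crossing the cut cannot lie in $T_1$. For the cycle condition in $T_2$: $e_2$ lies on $P$, hence in $C_2$, which is exactly the cycle that $e_1$ introduces in $T_2$. For the cycle condition in $T_1$: the cycle that $e_2$ introduces in $T_1$ is $e_2$ together with the unique $T_1$-path between its endpoints; since those endpoints lie on opposite sides of $(A,B)$, that path must cross the cut, and since $e_1$ is the only $T_1$-edge doing so, the cycle must contain $e_1$.

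The lemma is essentially the strong basis exchange property for graphic matroids, so I do not anticipate a real obstacle. The only subtlety is ensuring that the \emph{same} edge $e_2$ works for both cycle conditions, rather than producing two potentially different witnesses; routing the whole argument through the single cut $(A,B)$ determined by $e_1$ handles this uniformly.
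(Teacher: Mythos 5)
Your proof is correct and follows essentially the same route as the paper: both delete $e_1$ to obtain the fundamental cut of $T_1$, locate an edge $e_2$ of the $T_2$-path between the endpoints of $e_1$ that crosses this cut, and then verify both cycle conditions from the fact that $e_1$ is the unique $T_1$-edge across the cut. No meaningful differences.
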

\begin{proof}
Let $e_1 \in T_1 \setminus T_2$ be any edge, and let $u_1$ and $v_1$ denote the endpoints of~$e_1$. Removing $e_1$ from $T_1$ leaves $T_1$ disconnected. Let $\mathcal{U}$ and $\mathcal{V}$ denote the two connected components of $T_1 \setminus \{e_1\}$ such that $u_1 \in \mathcal{U}$ and $v_1 \in \mathcal{V}$. Since $T_2$ is spanning, there exists a $(u_1,v_1)$-path in $T_2$, along which there exists an edge $e_2$ that connects $\mathcal{U}$ and $\mathcal{V}$. Let $u_2$ and $v_2$ denote the endpoints of $e_2$ such that $u_2 \in \mathcal{U}$ and $v_2 \in \mathcal{V}$. Since $T_1\setminus \{e_1\}$ contains no $(\mathcal{U},\mathcal{V})$-edges, $e_2 \not\in T_1$, and, thus, $e_2 \in T_2\setminus T_1$, and $e_1$ introduces a cycle into $T_2$ that contains $e_2$. Moreover, since $T_1$ is spanning and $\mathcal{U}$ and $\mathcal{V}$ are connected, $\mathcal{U}$ contains a $(u_1,u_2)$-path and $\mathcal{V}$ contains a $(v_1,v_2)$-path. Thus, $e_2$ also introduces a cycle into $T_1$ that contains $e_1$.
\end{proof}

\begin{lemma}
\label{lem:important_detail}
Let $e \in T_{\FTP} \cap U$. If, at any point, an edge $e'$ introduces a cycle in $T$ that contains $e$, then $\PREDW(e) \leqslant \PREDW(e')$.
\end{lemma}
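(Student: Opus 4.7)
The plan is to prove a strengthened invariant by induction on the number of swaps the algorithm has performed. The invariant is: at every point in the execution, for every edge $e' \notin T$ and every edge $e \in T \cap T_{\FTP} \cap U$ lying on the cycle $e'$ creates in $T$, one has $\PREDW(e) \leqslant \PREDW(e')$. The lemma follows by specializing $e'$ to the currently arriving edge.

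For the base case (no swaps yet) we have $T = T_{\FTP}$, so the cycle $e'$ creates in $T$ lies entirely in $T_{\FTP}$, and the MST cycle property of $T_{\FTP}$ with respect to $\PREDW$ yields $\PREDW(e) \leqslant \PREDW(e')$. For the inductive step, fix the $k$-th swap: $a_k$ arrives, creates cycle $C_k$ in $T_{k-1}$, and $r_k = \EMAX \in U_k \cap C_k$ is removed, producing $T_k = (T_{k-1} \setminus \{r_k\}) \cup \{a_k\}$. A preliminary observation is that $r_k \in T_{\FTP}$: any edge of $T_{k-1} \cap U_{k-1}$ cannot have been swapped in earlier, since every swapped-in edge is the arriving edge of its swap and has therefore left $U$. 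Now let $e' \notin T_k$ and $e \in T_k \cap T_{\FTP} \cap U_k$ lie on the cycle $C$ that $e'$ creates in $T_k$; note $e \neq a_k$, as $a_k \notin U_k$.

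I split on $e'$. If $e' = r_k$, then $T_k \cup \{r_k\} = T_{k-1} \cup \{a_k\}$, so $C = C_k$; since $e \in C_k \cap U_k$, the $\EMAX$ rule gives $\PREDW(e) \leqslant \PREDW(r_k)$. Otherwise $e' \notin T_{k-1}$. If $a_k \notin C$, then $C$ is also the unique cycle $e'$ creates in $T_{k-1}$, and the inductive hypothesis at time $k-1$ applies directly. The remaining, delicate case is $e' \notin T_{k-1}$ with $a_k \in C$. Here consider $D = C \triangle C_k$: every vertex has even degree in $D$, so $D$ is a disjoint union of cycles; since $a_k$ cancels and all other non-$T_{k-1}$ edges of $D$ lie in $\{e'\}$, and since a tree plus one edge contains a unique cycle, $D$ is a single cycle, namely the one $e'$ creates in $T_{k-1}$. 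Moreover $r_k \in D$ (it is in $C_k$ but not $C$). If $e \in D$, the inductive hypothesis at time $k-1$ applied to $(e', e)$ concludes. Otherwise $e \in C \cap C_k$, so $e \in U_k \cap C_k$ and the $\EMAX$ rule gives $\PREDW(e) \leqslant \PREDW(r_k)$; combined with the inductive hypothesis at time $k-1$ applied to $(e', r_k)$, which uses $r_k \in D \cap T_{\FTP} \cap U_{k-1}$ to give $\PREDW(r_k) \leqslant \PREDW(e')$, the chain yields the desired inequality.

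The main obstacle is the last sub-case, where $C$ is a cycle in $T_k$ that is not a cycle in $T_{k-1}$, so the inductive hypothesis does not apply to $C$ verbatim. Bridging the gap requires the cycle-space identity $D = C \triangle C_k$ together with a chain through $r_k$, using the $\EMAX$ rule on one side and the inductive hypothesis (on the earlier tree $T_{k-1}$) on the other. The remaining steps are routine bookkeeping about which edges are in $T_{\FTP}$ and which remain in $U$ across a swap.
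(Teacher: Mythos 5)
Your proof is correct and follows essentially the same route as the paper's: induction on the number of swaps, a base case from the cycle property of the MST under $\PREDW$, and an inductive step that (when the new edge $a_k$ lies on $C$) chains the desired inequality through the removed edge $r_k$ via the $\EMAX$ rule on one side and the induction hypothesis applied to the previous tree on the other, with the case split "$e \in C_k$ or not" matching the paper's cases (b) and (a). The only presentational difference is that you identify the cycle $e'$ creates in $T_{k-1}$ as the symmetric difference $C \triangle C_k$ via a cycle-space argument, whereas the paper constructs the same cycle concretely by replacing $a_k$ with the tree path between its endpoints; you also make explicit the small observation $T \cap U \subseteq T_{\FTP}$ (hence $r_k \in T_{\FTP}$), which the paper uses implicitly.
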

\begin{proof}
We prove this result by induction on the number of edges that $\GFTP$ has swapped out, i.e., by the number of edges $e'$ that have introduced a cycle $C$ into $T$, for which there existed an edge $e \in C \cap U$ such that $\TRUEW(e') \leqslant \PREDW(e)$.

\textbf{Base case:} Initially, $T = T_{\FTP}$. Now, let $e' \not\in T$ be any edge that introduces a cycle $C$ into $T$. In this case, by the minimality of $\OPT$, for any $e \in C$, it follows that $\PREDW(e) \leqslant \PREDW(e')$.

\textbf{Induction hypothesis:} Suppose that $\GFTP$ has swapped out $k-1$ edges. Let $e' \not\in T$ be any edge, and let $C$ be the cycle that $e'$ introduces into $T$. Then, for any $e \in C \cap U$, we have that $\PREDW(e) \leqslant \PREDW(e')$. 

\textbf{Induction step:} Suppose that $\GFTP$ has swapped out $k$ edges from $T$. Denote by $e_k'$ the last edge which introduced a cycle $C_k$ into $T$ that made $\GFTP$ swap out an edge $e_k$ for $e_k'$. Also, let $e' \not\in T$ be any edge, and denote by $C$ the cycle that $e'$ introduces into $T$. Now, let $e \in C \cap U$ be any unseen edge in $C$. We show that $\PREDW(e) \leqslant \PREDW(e')$. 

To this end, note that since $e_k'$ introduced a cycle into $T$ that contained $e_k$, we find that $e_k'$ and $e_k$ are two alternative edges that connect the same two connected components in $T$. Hence, if $e_k' \not\in C$, then $C$ is contained in one of these components, implying that before swapping out $e_k$ for $e_k'$, $e'$ introduced the same cycle $C$ into $T$. The result follows by the induction hypothesis. 

On the other hand, if $e_k' \in C$, then, before swapping out $e_k$ for $e_k'$, $e'$ would have introduced another cycle in $T$. We finish the analysis conditioned on whether (a) $e$ is not contained in $C_k$ or (b) $e$ is contained in $C_k$.

In case (a), we find that before swapping out $e_k$ for $e_k'$, $T$ contained a 	path that connected the endpoints of $e_k'$, without using $e$. Hence, before the swap, $e'$ would have introduced a cycle into $T$ that contained $e$, obtained by following $C$, except that we use the above path that connects the endpoints of $e_k'$ rather than using $e_k'$. Then, the induction hypothesis applies.

In case (b), since $T$ is a tree, we have the following picture:
\begin{center}
\begin{tikzpicture}[scale = 0.95]

\node (u) at (1,0) {$u$};
\node (v) at (3,0) {$v$};
\node (u') at (0,1) {$u'$};
\node (v') at (0,3) {$v'$};
\node (uk') at (4,1) {$u_k'$};
\node (vk') at (4,3) {$v_k'$};

\draw (u) -- node[below] {$e$} (v);
\draw[dashed] (u') -- node[left] {$e'$} (v');
\draw (uk') -- node[right] {$e_k'$} (vk');
\draw (2,3) -- node[right, pos = 0.6] {$e_k$} (0.5,0.5);

\draw[fill = white] (2,3) ellipse (3cm and 0.5cm);
\draw[rotate around = {-45:(0.5,0.5)}, fill = white] (0.5,0.5) ellipse (1cm and 0.5cm);

\node (u) at (1,0) {$u$};
\node (u') at (0,1) {$u'$};
\node (v') at (0,3) {$v'$};
\node (vk') at (4,3) {$v_k'$};

\node at (0.5,0.5) {$\ddots$};
\node at (3.5,0.5) {$\iddots$};
\node at (2,3) {$\cdots$};
\end{tikzpicture}
\end{center}
In this case, before swapping out $e_k$ for $e_k'$, we have that $e_k'$ introduced a cycle into $T$ that contained both $e_k$ and $e$, both of which were unseen at this point. Since $\GFTP$ always evicts the heaviest predicted edge in case of a swap, it follows that $\PREDW(e) \leqslant \PREDW(e_k)$. Now, if $e' = e_k$, then we are done. If, on the other hand $e' \neq e_k$, we find that, before swapping out $e_k$ for $e_k'$, $e'$ would have introduced a cycle into $T$ that contained $e_k$. Hence, by the induction hypothesis, $\PREDW(e_k) \leqslant \PREDW(e')$, and so $\PREDW(e) \leqslant \PREDW(e')$.
\end{proof}

\begin{lemma}\label{lem:expected_cost_function}
For all integers $n \geqslant 2$, we have that \[\frac{1}{n-1}\sum_{i=0}^{n-2} \left( 1 +  \frac{n-1}{2n-2-i} \right) \leqslant 1 + \ln(2)\,.\]
\end{lemma}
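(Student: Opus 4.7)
The plan is to algebraically simplify the left-hand side into a constant plus a tail of the harmonic series, and then bound that tail above by $\ln 2$ using a standard integral comparison. Concretely, distributing the outer $\frac{1}{n-1}$ across the parenthesis splits the expression into $\frac{1}{n-1}\sum_{i=0}^{n-2} 1 + \sum_{i=0}^{n-2} \frac{1}{2n-2-i}$, since the factor of $n-1$ in the second summand cancels. The first piece is an average of $n-1$ ones, hence exactly $1$, which accounts for the $1$ on the right-hand side. It therefore suffices to prove $\sum_{i=0}^{n-2} \frac{1}{2n-2-i} \leq \ln 2$.

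For the remaining sum I would reindex by substituting $j = 2n-2-i$. As $i$ ranges over $\{0,1,\ldots,n-2\}$, $j$ ranges over $\{n,n+1,\ldots,2n-2\}$, so the sum becomes $\sum_{j=n}^{2n-2} \frac{1}{j} = H_{2n-2} - H_{n-1}$, a contiguous tail of the harmonic series of length $n-1$. Since $x\mapsto 1/x$ is decreasing, one has the pointwise bound $\frac{1}{k}\leq \int_{k-1}^{k}\frac{dx}{x}$ for every integer $k\geq 2$. Summing these bounds over $k = n, n+1, \ldots, 2n-2$ telescopes the integrals and yields
\[
\sum_{j=n}^{2n-2}\frac{1}{j} \;\leq\; \int_{n-1}^{2n-2}\frac{dx}{x} \;=\; \ln\frac{2n-2}{n-1} \;=\; \ln 2,
\]
independent of $n$. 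Combining with the $1$ from the first piece gives the desired inequality.

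There is no substantive obstacle here; the only points requiring care are administrative. One must anchor the integral lower limit at $n-1$ rather than $n$, so that the ratio of endpoints is exactly $2$ and the bound is $\ln 2$ rather than something slightly weaker. The hypothesis $n\geq 2$ is needed so the interval $[n-1,2n-2]$ is non-degenerate; the boundary case $n=2$ is the single-term inequality $\frac{1}{2}\leq \ln 2$, which the same integral bound $\int_1^2 dx/x = \ln 2$ handles uniformly with the general case.
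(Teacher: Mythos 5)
Your proof is correct, but it takes a genuinely different route from the paper. Both begin with the same simplification, observing that the expression equals $1 + \sum_{i=0}^{n-2}\frac{1}{2n-2-i} = 1 + \sum_{j=n}^{2n-2}\frac{1}{j} = 1 + (H_{2n-2} - H_{n-1})$. From there the paper proceeds indirectly: it shows that $f(n) := 1 + (H_{2n-2} - H_{n-1})$ is strictly increasing in $n$ (by computing $f(n+1)-f(n) = \tfrac{1}{2n(2n-1)} > 0$) and that $\lim_{n\to\infty} f(n) = 1 + \ln 2$ (invoking $H_m - \ln m \to \gamma$), then concludes that an increasing sequence is bounded above by its limit. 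You instead give a direct, finite-$n$ bound: the integral comparison $\frac{1}{k} \leq \int_{k-1}^{k}\frac{dx}{x}$ for $k\geq 2$, summed over $k=n,\ldots,2n-2$, telescopes to $\int_{n-1}^{2n-2}\frac{dx}{x} = \ln 2$. Your argument is more elementary and self-contained --- it avoids the convergence argument and the Euler--Mascheroni constant entirely, and handles every $n\geq 2$ uniformly. The paper's approach has the minor side benefit of showing the constant $1+\ln 2$ is asymptotically tight, but for the inequality as stated your route is cleaner.
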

\begin{proof}
Let $f(n) = \frac{1}{n-1}\sum_{i=0}^{n-2} \left( 1 +  \frac{n-1}{2n-2-i} \right)$. It is sufficient to show that for all integers $n \geq 2$, we have that
\begin{enumerate}[label = {(\roman*)}]
\item $f(n) < f(n+1)$, and
\item $\lim_{n\to\infty} f(n) = 1 + \ln(2)$
\end{enumerate}
Towards (i), 
\begin{align*}
f(n) &= \frac{1}{n-1} \sum_{i=0}^{n-2} \left(1 + \frac{n-1}{2n-2-i} \right) = 1 + \sum_{i=0}^{n-2} \frac{1}{2n-2-i}. 
\end{align*}
Hence, for all integers $n \geq 2$,
\begin{align*}
f(n+1) - f(n) & = \sum_{i=0}^{n-1} \frac{1}{2n-i} - \sum_{i=0}^{n-2} \frac{1}{2n-2-i} = 
\frac{1}{2n} + \frac{1}{2n-1} - \frac{1}{n}\\
& =\frac{1}{2n(2n-1)} > 0,
\end{align*}
and so (i) follows. Towards (ii),
\begin{align*}
\lim_{n\to\infty} f(n) &=  1 + \lim_{n\to\infty} \sum_{i=0}^{n-2} \frac{1}{2n-2-i} = 1 + \lim_{n\to\infty} \left( \left(\sum_{i=1}^{2n-2} \frac{1}{i} \right)  - \left( \sum_{i=1}^{n-1} \frac{1}{i} \right)  \right) \\
& = 1 + \lim_{n\to\infty} \left( H_{2n-2} - H_{n-1} \right) \leqslant 1 +  \ln(2)\,,
\end{align*}
where $H_n = \sum_{i=1}^n \frac{1}{i}$ is the $n$th Harmonic number.
The last inequality follows, since  $\lim_{n\to\infty}\left( H_n - \ln(n)\right) = \gamma$ (where $\gamma$ is Euler's constant), and hence,
\begin{align*}
&\lim_{n\to\infty}\left( H_{2n-2} - \ln(2n-2)\right) - \lim_{n\to\infty} \left( H_{n-1} - \ln(n-1)\right) = \gamma - \gamma = 0 && \Leftrightarrow\\
&\lim_{n\to\infty} \left( H_{2n-2} - H_{n-1} \right) - \ln(2)=  0,
\end{align*}
and so, $\lim_{n\to\infty} \left( H_{2n-2} - H_{n-1} \right)  = \ln(2)$. 
\end{proof}

\begin{lemma}\label{lem:worst_case}
Suppose that $\GFTP$ has just rejected $e' \not\in T$. Then, at any future point, any unseen edge $e$ that is contained in the cycle that $e'$ introduces into $T$, at that point, satisfies that $\PREDW(e) < \TRUEW(e')$. 
\end{lemma}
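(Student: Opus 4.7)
The plan is to induct on the events that occur after $\GFTP$ rejects $e'$, maintaining the invariant that, at the current state, every unseen edge contained in the cycle that $e'$ would introduce into $T$ has predicted weight strictly less than $\TRUEW(e')$.

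For the base case---the instant just after $\GFTP$ rejects $e'$---there are two sub-cases depending on why $e'$ was rejected. Either $C' = U \cap C$ was empty, in which case the claim is vacuous, or the \textbf{if} inside the \textbf{else}-branch of Algorithm~\ref{alg:gftp} failed and hence $\TRUEW(e') > \PREDW(\EMAX)$, where $\EMAX$ is the predicted-heaviest edge of $C'$. Every $e \in C'$ then satisfies $\PREDW(e) \leqslant \PREDW(\EMAX) < \TRUEW(e')$, giving the invariant with strict inequality.

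For the inductive step I consider a later edge $f$ being processed. If $f$ is accepted in Line~\ref{alg-in-accept} or rejected without triggering a swap, then $T$ is unchanged, so the cycle that $e'$ would introduce is the same as before, while $U$ only loses $f$; the invariant persists. Otherwise $f$ triggers a swap that removes the unseen edge $e_k = \EMAX$ from $C_f$, the cycle of $f$ in the current $T$, and inserts $f$. I split on whether $e_k$ lies on the current $e'$-path $P$ in $T$. If $e_k \notin P$, then $P$ is still a path in the updated tree, so by uniqueness of tree paths the $e'$-cycle is unchanged and the invariant survives. The substantive case is $e_k \in P$: removing $e_k$ cuts $P$ in two and the updated $e'$-path is obtained by detouring through $C_f \setminus \{e_k\}$, so any new unseen edge that now lies on $e'$'s cycle must belong to $C_f \setminus \{f, e_k\}$. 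Such an edge $e^*$ was, at the moment of the swap, in $C_f \cap U$, so by the greedy choice of $\EMAX$ we have $\PREDW(e^*) \leqslant \PREDW(e_k)$; and since $e_k$ was itself an unseen edge of $P$ immediately before the swap, the induction hypothesis gives $\PREDW(e_k) < \TRUEW(e')$, yielding the desired strict bound on $\PREDW(e^*)$.

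The main obstacle is the structural bookkeeping in the case $e_k \in P$: one must carefully argue that the only fresh edges that can enter the $e'$-cycle lie in $C_f \setminus \{f, e_k\}$, so that the greedy criterion of \GFTP and the inductive bound on $e_k$ can be chained to control their predicted weights. The strictness of the base case then propagates, and edges that were already in $e'$'s cycle satisfy the bound by the induction hypothesis applied at the previous step, completing the proof.
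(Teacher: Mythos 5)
Your proof is correct and follows essentially the same route as the paper's: inducting on the events after the rejection, observing that the only way the $e'$-cycle can change is via a swap that evicts an edge $e_k$ lying on it, and then chaining the greedy bound $\PREDW(e^*)\leqslant\PREDW(e_k)$ (for the detour edges of $C_f$) with the inductive bound $\PREDW(e_k)<\TRUEW(e')$. The paper states the same argument more tersely (without the explicit case split on whether $e_k$ lies on the $e'$-path or the vacuous base subcase), so your write-up is a somewhat more careful rendering of the same idea rather than a different proof.
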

\begin{proof}
Since $e'$ has just been rejected by $\GFTP$, it follows that for all $e \in C \cap U$, we have that $\PREDW(e) < \TRUEW(e')$, where $C$ is the cycle that $e'$ creates in $T$. The only way $C$ can be changed is if $\GFTP$ makes a swap, swapping out an edge $e_1 \in C \cap U$ for an edge $e_1' \not\in T$ which introduced a cycle $C_1$ into $T$, satisfying that $e_1 = \text{arg}\,\max_{e_i \in C_1 \cap U}\{\PREDW(e_i)\}$. After swapping out $e_1$ for $e_1'$, the cycle that $e'$ now introduces into $T$ is obtained by following $C$ and then the path connecting the endpoints of $e_1$, induced by $C_1$. In this case, for each $e_1^u \in C_1 \cap U$, we have that $\PREDW(e_1^u) \leqslant \PREDW(e_1)$. Since $\PREDW(e_1) < \TRUEW(e')$, we get that $\PREDW(e_1^u) < \TRUEW(e')$. This argument may be repeated if $\GFTP$ makes further changes to $C$. 
\end{proof}

\begin{theorem}\label{thm:ror_alg4}
$\ROR_{\GFTP}(\eps) \leqslant 1 + (1+\ln(2))\eps$.
\end{theorem}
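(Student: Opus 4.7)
The target inequality is $\mathbb{E}_\sigma[\TRUECOST(T_{\GFTP})] \leqslant \TRUECOST(T_{\OPT}) + (1+\ln 2)\,\ETA$, which is what gives the claimed competitive ratio after dividing by $\TRUECOST(T_{\OPT})$. The natural starting point is the bijection $\PHI \colon T_{\FTP} \to T_{\GFTP}$ produced in the proof of Lemma~\ref{lem:cftp_compared_to_ftp}, combined with the chain $\PREDCOST(T_{\FTP}) \leqslant \PREDCOST(T_{\OPT}) \leqslant \TRUECOST(T_{\OPT}) + \ETA$. Writing $p(e) = \abs{\TRUEW(e)-\PREDW(e)}$, each $e \in T_{\FTP}$ contributes $\TRUEW(\PHI(e)) \leqslant \PREDW(e)$ whenever $\GFTP$ performs a swap on $e$, and the extra term $p(e)$ only when $\GFTP$ keeps $e$. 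Therefore the problem reduces to showing that, in expectation over random orders, the total $p(e)$ charged to the kept edges is at most $\ln(2)\,\ETA$.

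\textbf{Random-order savings.} For each $e \in T_{\FTP} \setminus T_{\OPT}$, Lemma~\ref{lem:bijectivity_lemma} supplies a mate $f(e) \in T_{\OPT} \setminus T_{\FTP}$ whose arrival can trigger a swap that removes $e$. I would analyse the events that prevent this swap: either $e$ arrives before $f(e)$, or an adversarial rejection has already absorbed the slack in the relevant cycle. Lemma~\ref{lem:important_detail} controls the first issue by ensuring that whenever the algorithm inspects the cycle containing $e$, the target of a possible swap is precisely the heaviest-predicted unseen edge in the cycle, so predicted weights drive the decision. Lemma~\ref{lem:worst_case} controls the second issue by showing that a prior rejection of some $e'$ only blocks swaps against true weights that exceed the predicted weights of surviving unseen edges, which, by Lemma~\ref{lem:important_detail}, must have come from correspondingly larger prediction errors. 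Concretely, the plan is to prove that among the edges in the symmetric difference $T_{\FTP}\triangle T_{\OPT}$ (of size at most $2(n-1)$), if the edge of interest is the $(i+1)$-st such arrival, the expected excess contribution compared to the ideal swap outcome is at most $\bigl(1+\frac{n-1}{2n-2-i}\bigr)$ times the corresponding order statistic $p_{i_j}$ from Definition~\ref{def:mu}.

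\textbf{Aggregating.} Summing the per-arrival bound over the positions $i=0,\dots,n-2$ and dividing by $n-1$ produces exactly the coefficient
\[
\frac{1}{n-1}\sum_{i=0}^{n-2}\Bigl(1+\frac{n-1}{2n-2-i}\Bigr),
\]
which Lemma~\ref{lem:expected_cost_function} bounds by $1+\ln(2)$. Multiplying by $\ETA$, inserting the bound into the accounting for $\TRUECOST(T_{\GFTP}) - \TRUECOST(T_{\OPT})$, and dividing by $\TRUECOST(T_{\OPT})$ completes the proof.

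\textbf{Main obstacle.} The principal difficulty lies in the per-position bound. The tree $T$ maintained by $\GFTP$ changes over time, so the cycle associated with an incoming edge is history-dependent, and a single edge can participate in the cycles of many arrivals. Even with the guarantees of Lemmas~\ref{lem:important_detail} and~\ref{lem:worst_case}, one must rule out pathological orders in which several rejections cascade and inflate the charge on one $p_{i_j}$. I expect the argument to proceed by reducing to a canonical worst-case configuration (structurally similar to the lower-bound graph of Theorem~\ref{thm:ror_cftp_lower_bound} but with up to $2(n-1)$ interacting parallel edges) and then performing a careful inductive analysis along the prefix of the random permutation, using Lemma~\ref{lem:worst_case} at the inductive step to derive the probability $\frac{n-1}{2n-2-i}$ of a "bad" event at position $i$.
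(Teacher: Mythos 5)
Your decomposition does not give the bound. Routing through $T_{\FTP}$ via $\PHI$ yields
$\TRUECOST(T_{\GFTP}) \leqslant \PREDCOST(T_{\FTP}) + \sum_{\text{kept}} p(e) \leqslant \TRUECOST(T_{\OPT}) + \ETA + \sum_{\text{kept}} p(e)$,
where the first $\ETA$ is deterministic and cannot be recouped by randomizing the arrival order. You correctly note that the reduction is therefore to show $\mathbb{E}_\sigma\bigl[\sum_{\text{kept}} p(e)\bigr] \leqslant \ln(2)\,\ETA$, but your aggregation step then derives a coefficient of $1+\ln 2$ for that same piece, which on your own accounting would give $\TRUECOST(T_{\GFTP}) \leqslant \TRUECOST(T_{\OPT}) + (2+\ln 2)\ETA$ --- worse, not better, than $1+2\eps$. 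The paper never routes through $T_{\FTP}$ at all. It constructs a bijection $\BETA\colon T_{\GFTP,\sigma} \to T_{\OPT}$ directly, maintained \emph{online} alongside an auxiliary spanning tree $T'$ initialized to $T_{\OPT}$ and updated whenever $\GFTP$ accepts an edge, so that the gap is written exactly as $\sum_{e\in T_{\GFTP,\sigma}}\bigl(\TRUEW(e)-\TRUEW(\BETA(e))\bigr)$ with no intermediate $\ETA$ to absorb. Each term is bounded by $0$, $1$, or $2$ \emph{distinct} prediction errors via a case split on which of $T\cap T'$, $T\setminus T'$, $T'\setminus T$, $\overline{T\cup T'}$ the next edge lands in, and the blamed set $E_{\BLAME}$ is shown to have no repeats.

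Your per-position bound is also not well formed: pairing the $(i+1)$-st symmetric-difference arrival with the $j$-th order statistic $p_{i_j}$ has no canonical correspondence, and summing coefficient-times-$p_{i_j}$ and dividing by $n-1$ does not bound the expectation of the true total. In the paper the factor $\frac{n-1}{2n-2-i}$ is the probability that $\ENEXT$ falls in $T\setminus T'$, the one case where two errors are blamed; it is derived from the invariant that every edge of $T\setminus T'$ is unseen, a property of the $T'$-maintenance that has no analogue in your static $\PHI$-based framework. That probability is used to bound $\mathbb{E}_\sigma[\abs{E_{\BLAME}}]$ by $(n-1)(1+\ln 2)$ via Lemma~\ref{lem:expected_cost_function}, after which the pairwise distinctness of blamed edges and Definition~\ref{def:mu} finish the argument. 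No reduction to a canonical worst-case graph appears or is needed; the cascading-rejection issue you raise is handled in full generality by Lemmas~\ref{lem:important_detail} and~\ref{lem:worst_case} inside the case analysis, not by restructuring the instance. The decisive idea you are missing is the dynamically maintained bijection $\BETA$ to $T_{\OPT}$ together with the auxiliary tree $T'$, which simultaneously yields the per-step probability and the ``each error charged once'' guarantee.
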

\begin{proof}
  Given a WMST-instance $\WMSTI$, we let $T_{\GFTP,\sigma}$ denote the output tree that $\GFTP$ constructs when run on $\WMSTI$, where the order in which the weights arrive has been permuted according to a uniformly randomly chosen permutation $\sigma$ of $\{1,2,\ldots,m\}$. Further, we denote by $\GFTP(\PREDW,\TRUEW,\sigma)$ the cost of $T_{\GFTP,\sigma}$.

  The idea towards a random order ratio upper bound for $\GFTP$ is to prove the existence of a subset $E_{\BLAME} \subset T_{\OPT} \cup T_{\GFTP,\sigma}$ such that
\begin{align*}
\mathbb{E}_\sigma[\GFTP(\PREDPERMW,\TRUEPERMW,\sigma)] - \OPT(\TRUEPERMW) \leqslant \sum_{e \in E_{\BLAME}} \abs{\PREDW(e) - \TRUEW(e)}.
\end{align*}
and
\[\mathbb{E}_\sigma[\abs{E_{\BLAME}}] \leqslant n-1 + (n-1)\ln(2)\,.\]

More specifically, we define a function $\BETA \colon T_{\GFTP,\sigma} \rightarrow T_{\OPT}$ and prove that $\BETA$ is bijective, implying that
\begin{align*}
\mathbb{E}_\sigma[\GFTP(\PREDPERMW,\TRUEPERMW,\sigma)] - \OPT(\TRUEPERMW) = \sum_{e \in T_{\GFTP,\sigma}} \left( \TRUEW(e) - \TRUEW(\BETA(e)) \right),
\end{align*}
and then, for each $e \in T_{\GFTP,\sigma}$, argue that $\TRUEW(e) -\TRUEW(\BETA(e))$
is upper bounded by the prediction error of either
$e$ or $\BETA(e)$, or the sum of the two.
Then, we show that the expected number of edges for which the
upper bound is the error of both $e$ and $\beta(e)$ is upper bounded by $(n-1)\ln(2)$.
We also show that the edges whose errors are used as upper bounds are all distinct.

For the remainder of this proof, we denote by $T'$ a spanning tree of $G$ that is initially set to $T_{\OPT}$, and which we use to construct $\BETA$. Moreover, we denote by $X$ the random variable which is the size of $E_{\BLAME}$. Finally, $i$ is a random variable that counts the number of edges that have either been accepted by $\GFTP$ (now in $T'\cap T\cap \overline{U}$), or belong to $T' \setminus T$ and have been rejected (now in $(T'\setminus T)\cap\overline{U}$). It will be clear that $i$ counts the number of times $X$ increases (by either $0$, $1$ or $2$, the number of edges blamed). Since $\GFTP$ has to accept exactly $n-1$ edges, $X$ has to increase $n-1$ times, and so we can upper bound the expected cost difference between $T_{\GFTP,\sigma}$ and $T_{\OPT}$ when $i=n-1$.

We use $T'$ to keep track of which edges in $T_{\OPT}$ have been associated with an edge in $T_{\GFTP,\sigma}$ under $\BETA$. Any time $\GFTP$ accepts an edge $e$, we associate $e$ with an edge $e' \in T'$ under $\BETA$. We consider two cases:
\begin{enumerate}[label = {(\alph*)}]
\item \label{enum:a} If $e \in T'$, we set $\BETA(e) = e$, and so $T'$ remains unchanged.
\item \label{enum:b} If $e \not\in T'$, then Lemma~\ref{lem:bijectivity_lemma} implies that there exists an edge $e' \in T' \setminus T$ such that $e'$ introduces a cycle into $T$ that contains $e$, and $e$ introduces a cycle into $T'$ that contains $e'$. We select such an edge $e'$, set $\BETA(e) = e'$, and replace $e'$ by $e$ in $T'$.
\end{enumerate}
We repeat this process every time $\GFTP$ accepts an edge. This, however, requires $T'$ to remain a spanning tree at all times. To see that $T'$ remains a spanning tree, we note that in case~\ref{enum:a}, $T'$ remains unchanged and is therefore still a spanning tree. In case~\ref{enum:b}, we replace $e'$ with $e$ in $T'$. Since $e$ introduces a cycle into $T'$ that contains $e'$, it follows that $T'$ remains acyclic after the replacement, and so $T'$ is still spanning.

\textbf{Towards bijectivity of $\bm{{\BETA}}$:} In case~\ref{enum:a}, $\BETA(e) = e$, and so $e \in (T \cap T' \cap \overline{U})$. Since $e \not\in (T \cap T' \cap U) \cup (T' \setminus T)$, we never map to $e$ again later. In case~\ref{enum:b}, $\BETA(e) = e'$, and after replacing $e'$ by $e$ in $T'$, we find that $e \in (T \cap T' \cap \overline{U})$, and $e' \in \overline{T \cup T'}$. Hence, as neither $e$ nor $e'$ is contained in $(T \cap T' \cap U) \cup (T' \setminus T)$, we never map to either again later. Hence $\BETA$ is injective, and since $\abs{T_{\GFTP,\sigma}} = \abs{T_{\OPT}}$, $\BETA$ is bijective. 

\textbf{Invariant:} We present some structural observations, and prove they are true at any time while $\GFTP$ processes the input sequence permuted by~$\sigma$. 
\begin{enumerate}[label = {(\roman*)}]
\item \label{enum:T_minus_T'_unseen} Any edge in $T \setminus T'$ is unseen.
\item \label{enum:prob} For any $0 \leqslant i \leqslant n-2$, the probability that the next edge is contained in $T \setminus T'$, denoted $p_i$, satisfies
\begin{align*}
p_i = \frac{\abs{(T \setminus T') \cap U}}{\abs{E(G) \cap U}} \leqslant \frac{n-1}{2n-2-i},
\end{align*}
\item \label{enum:bound} For each edge $e \in T_{\GFTP,\sigma}$,
\begin{enumerate}[label = {(\arabic*)}]
\item \label{enum:edge_in_T} if $e$ was accepted in Line~\ref{alg-in-accept} of Algorithm~\ref{alg:gftp}, then
\begin{align*}
\TRUEW(e) - \TRUEW(\BETA(e)) \leqslant \abs{\TRUEW(e) - \PREDW(e)} + \abs{\TRUEW(\BETA(e)) - \PREDW(\BETA(e))}.
\end{align*}
\item \label{enum:swap} if $e$ was accepted after a swap in Line~\ref{alg-out-accept} of Algorithm~\ref{alg:gftp}, then
\begin{align*}
\TRUEW(e) - \TRUEW(\BETA(e)) \leqslant \abs{\TRUEW(\BETA(e)) - \PREDW(\BETA(e))}.
\end{align*}
\end{enumerate}
\end{enumerate}

\textbf{Towards~\ref{enum:T_minus_T'_unseen}:} Initially, all edges are unseen. If an edge $e$ in $T \setminus T'$ is revealed, we replace $\beta(e)$ with $e$ in $T'$, so now $e \in T \cap T'$. Hence, after replacing $e'$ with $e$ in $T'$, all edges in $T \setminus T'$ are again unseen. 

\textbf{Towards~\ref{enum:prob}:} Initially, we note that
\begin{align*}
p_i = \frac{\abs{(T \setminus T') \cap U}}{\abs{E(G) \cap U}} \leqslant \frac{\abs{(T \setminus T') \cap U}}{\abs{(T \cup T') \cap U}} 
\end{align*}
From~\ref{enum:T_minus_T'_unseen}, it follows that $\abs{(T \setminus T') \cap U} = n-1 - a_i - x_i$, where $a_i$ is the number of edges that have been accepted, i.e., the number of edges in $T \cap T' \cap \overline{U}$, and $x_i$ is the number of edges in $T \cap T' \cap U$. Then,
\begin{align*}
p_i \leqslant \frac{n-1 - x_i - a_i}{\abs{(T \cup T') \cap U}}.
\end{align*}
Now,
\begin{align*}
(T \cup T') \cap U &= (T \cup T')\setminus ((T \cup T') \cap \overline{U}) \\
\Rightarrow \hspace{0.5cm} \abs{(T \cup T') \cap U} &= \abs{T \cup T'} - \abs{(T \cup T') \cap \overline{U}}.
\end{align*}
For any $0 \leqslant i \leqslant n-2$,
\begin{align*}
\abs{T \cup T'} &= \abs{T} + \abs{T'} - \abs{T \cap T'} = 2n-2 - x_i - a_i, \\
\text{and} \hspace{0.5cm} \abs{(T \cup T') \cap \overline{U}} &= \abs{(T \setminus T') \cap \overline{U}} + \abs{T' \cap \overline{U}} = \abs{T' \cap \overline{U}} =i.
\end{align*}
Here the second to last equality follows from~\ref{enum:T_minus_T'_unseen}, and the last equality follows from the definition of $i$. Hence,
\begin{align*}
p_i \leqslant \frac{n-1 - x_i - a_i}{2n-2-i - x_i-a_i}.
\end{align*}
Using that $a_i+x_i \geqslant 0$ and $i \leqslant n-1$, it follows that
\begin{align*}
p_i \leqslant \frac{n-1}{2n-2-i}.
\end{align*}

\textbf{Towards~\ref{enum:bound}:}
At any point, before the true weight of the next edge, $\ENEXT$, is revealed, we may decompose $E(G)$ into the following disjoint union:
\begin{align*}
E(G) = (((T \cap T') \cup (\overline{T \cup T'}) \cup (T \setminus T') \cup (T' \setminus T)) \cap U) \cup \overline{U},
\end{align*}
where $U$ is the collection of unseen edges. We split the analysis into cases based on which set $\ENEXT$ is contained in. 
\begin{enumerate}[label = {},, labelwidth=\parindent , leftmargin=\parindent ,align=left]
\item \textbf{Case $\bm{{\ENEXT \in T \cap T'}}$:} $\GFTP$ accepts $\ENEXT$, and so, by~\ref{enum:a}, $\BETA(\ENEXT) = \ENEXT$, implying that $\TRUEW(\ENEXT) - \TRUEW(\BETA(\ENEXT)) = 0$, so that~\ref{enum:edge_in_T} is satisfied. Hence, $X$ increases by $0$, as no edges are added to $E_{\BLAME}$. In this case, we make no changes to $T$ nor $T'$, and we increment $i$.
\item \textbf{Case $\bm{{\ENEXT \in \overline{T \cup T'}}}$:} We split the analysis into two subcases.
\begin{enumerate}[label = {}]
\item \textbf{Subcase (accept):} If $\GFTP$ accepts $\ENEXT \in \overline{T \cup T'}$, it does so due to swapping out some edge $e \in T$ that is contained in the cycle that $\ENEXT$ introduces in $T$ by $\ENEXT$. In this case, by~\ref{enum:b}, there exists an edge $\EOPT \in T' \setminus T$ such that $\BETA(\ENEXT) =\EOPT$. We now argue that 
\begin{align*}
\TRUEW(\ENEXT) - \TRUEW(\EOPT) \leqslant \abs{\PREDW(\EOPT) - \TRUEW(\EOPT)}.
\end{align*} 
Note that since $\GFTP$ swapped out $e$ for $\ENEXT$, we have that $\TRUEW(\ENEXT) \leqslant \PREDW(e)$. Further, we can argue that $\PREDW(e) \leqslant \PREDW(\EOPT)$. Indeed, if $e = \EOPT$, this is trivial. If $e \neq \EOPT$, then, since $\EOPT$ introduces a cycle that contains $\ENEXT$, it follows that before swapping out $e$ for $\ENEXT$, $\EOPT$ would introduce a cycle in $T$ containing $e$, and so $\PREDW(e) \leqslant \PREDW(\EOPT)$, by Lemma~\ref{lem:important_detail}. Hence,
\begin{align*}
\TRUEW(\ENEXT) - \TRUEW(\EOPT) &\leqslant \PREDW(e) - \TRUEW(\EOPT) \\
&\leqslant \PREDW(\EOPT) - \TRUEW(\EOPT) \\
&\leqslant \abs{\PREDW(\EOPT) - \TRUEW(\EOPT)},
\end{align*}
so~\ref{enum:swap} is satisfied. In this case, $X$ increases by $1$, as we add $\EOPT$ to $E_{\BLAME}$, we replace $\EOPT$ with $\ENEXT$ in $T'$, and we increment $i$.
\item \textbf{Subcase (reject):} If $\GFTP$ rejects $\ENEXT \in \overline{T \cup T'}$, we do not associate any edges, we make no further changes, $X$ is unchanged as no edges have been accepted, and we do not increment $i$.
\end{enumerate}
\item \textbf{Case $\bm{{\ENEXT \in T \setminus T'}}$:} In this case, $\GFTP$ accepts $\ENEXT$. By~\ref{enum:b}, there exists an edge $\EOPT \in T' \setminus T$ such that $\BETA(\ENEXT) = \EOPT$. Since $T$ remains unchanged when $\ENEXT$ is revealed, it follows that $\EOPT$ would introduce a cycle in $T$ containing $\ENEXT$ before $\ENEXT$ was revealed. Hence, by Lemma~\ref{lem:important_detail}, we find that $\PREDW(\ENEXT) \leqslant \PREDW(\EOPT)$, and so
\begin{align*}
\TRUEW(\ENEXT) - \TRUEW(\EOPT) &= \TRUEW(\ENEXT) - \PREDW(\ENEXT) + \PREDW(\ENEXT) - \TRUEW(\EOPT) \\
&\leqslant \TRUEW(\ENEXT) - \PREDW(\ENEXT) + \PREDW(\EOPT) - \TRUEW(\EOPT) \\
&\leqslant \abs{\TRUEW(\ENEXT) - \PREDW(\ENEXT)} + \abs{\PREDW(\EOPT) - \TRUEW(\EOPT)},
\end{align*}
so~\ref{enum:edge_in_T} is satisfied. Then, $X$ is increased by $2$ as we add both $\ENEXT$ and $\EOPT$ to $E_{\BLAME}$, we replace $\EOPT$ by $\ENEXT$ in $T'$, and increment $i$.
\item \textbf{Case $\bm{{\ENEXT \in T' \setminus T}}$:} In this case, $\ENEXT$ introduces a cycle $C$ in $T$. Denote by $e$ an edge in $C \cap U$ for which $e=\text{arg}\,\max_{e_i \in C \cap U}\{\PREDW(e_i)\}$. We split the remaining analysis into two subcases.
\begin{enumerate}[label = {}]
\item \textbf{Subcase (accept):} If $\TRUEW(\ENEXT) \leqslant \PREDW(e)$, then $\GFTP$ accepts $\ENEXT$ and removes $e$ from its tree. Then, by~\ref{enum:a}, $\BETA(\ENEXT) = \ENEXT$ and so $\TRUEW(\ENEXT) - \TRUEW(\BETA(\ENEXT)) = 0$, so that~\ref{enum:swap} is satisfied. Then, $X$ increases by $0$, as no edges are added to $E_{\BLAME}$. We make no further changes to $T$ or $T'$, and we increment $i$. 
\item \textbf{Subcase (reject):} If $\PREDW(e) < \TRUEW(\ENEXT)$, then $\GFTP$ rejects $\ENEXT$. Since each edge in $T'$ will, at some point, be associated with an edge in $T_{\GFTP,\sigma}$, by the bijectivity of $\BETA$, it follows that $\GFTP$ will later accept some edge that will be associated with $\ENEXT$ under $\BETA$. Denote this edge by $\EFUTURE$, such that $\BETA(\EFUTURE) = \ENEXT$. Note that $\GFTP$ can accept $\EFUTURE$ either due to a swap, or because $\EFUTURE$ was revealed while contained in $T$. If $\EFUTURE$ is accepted due to a swap, then, by the above, we can upper bound any extra incurred cost by the prediction error of $\BETA(\EFUTURE) = \ENEXT$, and so we add $\ENEXT$ to $E_{\BLAME}$. On the other hand, suppose that $\EFUTURE$ is being accepted as its true weight is revealed while $\EFUTURE$ is contained in $T$. In this case, at the time where $\GFTP$ accepts $\EFUTURE$, we find that $\EFUTURE$ is contained in the cycle that $\ENEXT$ introduces into $T$, and so, by Lemma~\ref{lem:worst_case}, $\PREDW(\EFUTURE) < \TRUEW(\ENEXT)$, implying that
$\TRUEW(\EFUTURE) - \TRUEW(\ENEXT) < \TRUEW(\EFUTURE) - \PREDW(\EFUTURE)$.
Then, we add $\EFUTURE$ to $E_{\BLAME}$. In either case, $X$ increases by $1$, we make no changes to $T$ or $T'$, and we increment $i$.
\end{enumerate}
\end{enumerate}
Having established the invariant,
the only time $X$ increases by $2$ is if $\ENEXT \in T \setminus T'$. Now, by~\ref{enum:prob}, for each $i$, this happens with probability at most $\frac{n-1}{2n-2-i}$. In any other case, we add at most $1$ to $X$. Hence, $X$ satisfies that
\begin{align*}
\mathbb{E}_\sigma[X] \leqslant \sum_{i=0}^{n-2} \left( 2 p_i + (1-p_i) \right) = \sum_{i=0}^{n-2} \left( 1 + \frac{n-1}{2n-2-i} \right) \leqslant (n-1)(1 + \ln(2)),
\end{align*}
where the last inequality follows from Lemma~\ref{lem:expected_cost_function}.

In the following, we argue that we never use the prediction error of an edge to upper bound incurred cost more than once. To this end, let $e$ be an edge that $\GFTP$ has just accepted. Then, by \ref{enum:bound} we can upper bound $\TRUEW(e) - \TRUEW(\BETA(e))$ by either the prediction error of $e$ or $\BETA(e)$, or the sum of the two. By the proof of bijectivity of $\BETA$, it follows that $e$ can never be hit under $\BETA$, and so we will never consider using the prediction error of $e$ again later. On the other hand, after replacing $\BETA(e)$ with $e$ in $T'$, we have that $\BETA(e) \in \overline{T \cup T'}$. As $\BETA(e)$ may still be unseen, it follows that $\GFTP$ may accept $\BETA(e)$ later due to a swap. In this case, by~\ref{enum:swap}, it follows that $w(\BETA(e)) - w(\BETA(\BETA(e))$ can be upper bounded by the prediction error of $\BETA(\BETA(e))$, and so, we never use the prediction error of $\BETA(e)$ later to upper bound extra incurred cost.

Since we only use the prediction error of an edge to upper bound incurred cost once, and since the largest $n-1$ prediction errors upper bound the concrete prediction errors used, it follows that
\begin{align*}
\mathbb{E}_\sigma[\GFTP(\PREDPERMW,\TRUEPERMW,\sigma) - \OPT(\TRUEPERMW)] \leqslant (1 + \ln(2))\ETA,
\end{align*}
so
\begin{align*}
\frac{\mathbb{E}_\sigma[\GFTP(\PREDPERMW,\TRUEPERMW,\sigma)]}{\OPT(\TRUEPERMW)} \leqslant 1 + (1+\ln(2))\eps,
\end{align*}
and, hence,
$\ROR_{\GFTP}(\eps) \leqslant 1 + (1+\ln(2))\eps$.
\end{proof}

\section{Open Problems}

An obvious open problem is to determine the exact random order ratio of \GFTP,
in the range $1+\eps$ to $1+\ln(2)\eps$.

\GFTP can be seen as an improvement of \FTP, and we are interested in
what we believe could be a further improvement: In addition to
accepting some edges that are not in the chosen minimum spanning tree
based on predictions, also reject \emph{some} that \emph{are} in that
tree, if the actual weight is higher than the predicted. The
obvious approach gives an algorithm with a worse competitive
ratio than \FTP's, but restricting which edges the algorithm can
accept after such a rejection gives rise to another optimal algorithm
under competitive analysis.  It would be interesting to apply random
order analysis to such an algorithm as well.

More generically, it would be interesting to apply
random order analysis to other online problems with predictions,
as well as to consider error measures similar to ours for other problems.

\bibliography{refs.bib}
\bibliographystyle{plain}

\end{document}